\def\BibTeX{{\rm B\kern-.05em{\sc i\kern-.025em b}\kern-.08em
    T\kern-.1667em\lower.7ex\hbox{E}\kern-.125emX}}
\providecommand{\U}[1]{\protect\rule{.1in}{.1in}}
\providecommand{\U}[1]{\protect\rule{.1in}{.1in}}
\newtheorem{assumption}{Assumption}
\newtheorem{theorem}{Theorem}
\newtheorem{corollary}{Corollary}
\newtheorem{proposition}{Proposition}
\newtheorem{remark}{Remark}
\newtheorem{definition}{Definition}
\useunder{\uline}{\ul}{}
\newcommand{\multiline}[1]{  \begin{tabularx}{\dimexpr\linewidth-\ALG@thistlm}[t]{@{}X@{}}
#1
\end{tabularx}
}
\setlist[itemize]{leftmargin=*}
\newcommand{\R}{\mathbb{R}}
\newcommand{\N}{\mathbb{N}}
\newcommand{\T}{\top}
\newcommand{\I}{\mathbf{I}}
\newcommand{\0}{\mathbf{0}}
\newcommand{\E}{\mathcal{E}}
\newcommand{\X}{\mathcal{X}}
\newcommand{\diag}{\text{diag}}
\newcommand{\tsup}[1]{\textsuperscript{#1}}
\newcommand{\mb}[1]{\mathbf{#1}}
\renewcommand{\H}{\mathcal{H}}
\newcommand{\bm}[1]{\begin{bmatrix}#1\end{bmatrix}}
\begin{document}

\title{Inventory Consensus Control in Supply Chain Networks using Dissipativity-Based Control and Topology Co-Design
}

\author{Shirantha Welikala, \IEEEmembership{Member, IEEE}, Hai Lin \IEEEmembership{Senior Member, IEEE}, and Panos J. Antsaklis \IEEEmembership{Fellow, IEEE} 
\thanks{The support of the National Science Foundation (Grant No. CNS-1830335, IIS-2007949) is gratefully acknowledged.}
\thanks{Shirantha Welikala is with the Department of Electrical and Computer Engineering, School of Engineering and Science, Stevens Institute of Technology, Hoboken, New Jersey, USA 07030 \texttt{{\small \{swelikal\}@stevens.edu}}. Hai Lin and Panos J Antsaklis are with the Department of Electrical Engineering, College of Engineering, University of Notre Dame, Notre Dame, Indiana, USA 46556, \texttt{{\small \{hlin1,pantsakl\}@nd.edu}}.}}

\maketitle

\pagenumbering{arabic}
\thispagestyle{plain}
\pagestyle{plain}

\begin{abstract}
Recent global and local phenomena have exposed vulnerabilities in critical supply chain networks (SCNs), drawing significant attention from researchers across various fields. Typically, SCNs are viewed as static entities regularly optimized to maintain their optimal operation. However, the dynamic nature of SCNs and their associated uncertainties have motivated researchers to treat SCNs as dynamic networked systems requiring robust control techniques. In this paper, we address the \emph{SCN inventory consensus} problem, which aims to synchronize multiple parallel supply chains, enhancing coordination and robustness of the overall SCN. To achieve this, we take a novel approach exploiting dissipativity theory. In particular, we propose a dissipativity-based co-design strategy for distributed consensus controllers and communication topology in SCNs. It requires only the dissipativity information of the individual supply chains and involves solving a set of convex optimization problems, thus contributing to scalability, compositionality, and computational efficiency. Moreover, it optimizes the robustness of the SCN to various associated uncertainties, mitigating both \emph{bullwhip} and \emph{ripple} effects. We demonstrate our contributions using numerical examples, mainly by comparing the consensus performance with respect to standard steady-state control, feedback control, and consensus control strategies. 
\end{abstract}

\begin{IEEEkeywords}
Supply Chain Networks, Distributed Control, Inventory Control, Dissipativity-Based Control, Topology Design.
\end{IEEEkeywords}

\section{Introduction}\label{Sec:Introduction}

Global phenomena like pandemics, conflicts, climate change effects, and technological advancements have caused various supply chain issues around the world \cite{Aday2022,Jagtap2022,Leslie2022,Rajaeifar2022}. Moreover, local phenomena like holidays, shopping sales, trade union actions, adverse weather conditions, infrastructure failures, and economic crises always exacerbate such supply chain issues \cite{Jiang2023}. These challenges highlight the need to carefully design and control supply chain networks (SCNs). Consequently, SCNs have garnered significant attention from researchers across different fields in recent years \cite{Taboada2022}.

Traditional operations research-based supply chain management (SCM) systems assume static, deterministic SCNs and focus on limited objectives. Some such SCM system examples include approaches like mathematical programming \cite{Mula2010}, inventory-transportation \cite{Qui1999}, and inventory-routing \cite{Andersson2010}. In contrast, control theoretic SCM techniques are gaining popularity because they account for the dynamic nature and uncertainties of SCNs, leading to adaptive, robust, and resilient SCNs. Extensive discussions on the advantages of control theoretic SCM systems can be found in the review papers \cite{Surana2005,Ivanov2012,Ivanov2018,Janamanchi2013,Monostori2015,Taboada2022}.

Typical SCN challenges arise from uncertainties in the associated supplies \cite{Chen2022}, logistics \cite{Liu2015}, inventory \cite{Raj2022}, product lifetimes \cite{Sun2006} and demands \cite{Li2020}. To counter such challenges, robust control mechanisms must be implemented at different stages of the SCN \cite{Li2020,Liu2015}. On the other hand, SCM systems are typically designed to optimize associated operational costs  \cite{Wang2021b}, error metrics (e.g., consensus \cite{Li2020} and tracking \cite{Anne2009}), time metrics (e.g., lead time \cite{Hosoda2015} and waiting time \cite{Welikala2020J2}) or robustness measures (e.g., \emph{bullwhip effect} \cite{Lee1997,Chen2000} or \emph{ripple effect} \cite{Ivanov2015}). Among these objectives, when designing robust control mechanisms for SCNs, particular focus is given to minimizing the bullwhip and ripple effects.

The bullwhip effect is the amplification of the demand uncertainty as one moves up the supply chain (towards the manufacturer/supplier) \cite{Lee1997}. It can be viewed as a byproduct of the SCM system trying to overcompensate for demand uncertainty, order batching, price fluctuations, and rationing and shortage gaming \cite{Lee1997}. An extensive body of literature has demonstrated the existence, identified the probable causes, and proposed techniques to reduce the bullwhip effect \cite{Chen2000}. Some such techniques are \cite{Lee1997}: (1) avoiding frequent demand forecast updates, (2) breaking batched orders, (3) stabilizing prices, and (4) eliminating gaming in shortage situations. Lately, more quantifiable and control theoretic approaches to formally study and minimize this bullwhip effect have come to prominence \cite{Li2020,Li2018,Chen2000,Dejonckheere2003,Huang2007,Papanagnou2022}. The work in \cite{Chen2000} studies a simple two-stage manufacturer-retailer model using a moving average demand estimator, concluding that centralizing demand information can minimize the bullwhip effect. Conversely, \cite{Dejonckheere2003} demonstrates that standard order-up-to policies inherently generate the bullwhip effect under established demand forecasting methods and proposes a dynamic replenishment rule based on smoothed demand forecasts as a remedy. The work in \cite{Wang2016Bullwhip} provides a comprehensive review of the bullwhip effect and discusses various strategies to mitigate it, including information sharing and supply chain coordination.

On the other hand, the ripple effect is the amplification of the changes in some critical parameters (or disruptions in some structures) through the rest of the supply chain in all directions \cite{Ivanov2014a}. A detailed comparison between the bullwhip effect and the ripple effect can be found in \cite[Tab. 1]{Ivanov2015}. As an application example, \cite{Ivanov2015} addresses a supply chain design problem that considers ripple effects from disruptions by optimizing multiple prioritized objectives, including performance indicators for stability, reliability, robustness, and resilience. This approach leverages the control theoretic description of SCNs as controllable dynamic systems with dynamic structures, first proposed in \cite{Ivanov2010}. The work in \cite{Dolgui2018} provides a comprehensive analysis and review of the ripple effect in SCNs while emphasizing robust control strategies as a means to mitigate the ripple effect.

Building upon the preceding discussion, it is evident that there is a significant need for control-theoretic SCM systems and supply chain coordination schemes that leverage information-sharing via communication topologies to mitigate adverse effects like the bullwhip and ripple effects. Addressing this gap, in this paper, we focus on the inventory consensus control problem, aiming to synchronize multiple parallel supply chains within an SCN. In particular, we propose a dissipativity-based consensus control and communication topology co-design scheme that both optimizes the communication topology and minimizes the bullwhip and ripple effects in a control-theoretic sense. The resulting co-design ensures a robust synchronization of the supply chains, thereby enhancing the overall SCN resilience and efficiency.

Several studies have addressed the consensus control problem in SCNs using control-theoretic methods. For instance, \cite{Sun2022Dynamical} proposes distributed consensus tracking control techniques for variable-order fractional SCNs, employing multi-agent neural network-based methods to achieve synchronization among supply chains. Similarly, \cite{Li2020} investigates the $H_\infty$ consensus problem for SCNs under switching topologies and uncertain demands, designing switching controllers that ensure consensus while attenuating the bullwhip effect. Additionally, finite-time consensus tracking control approaches have been developed to handle disturbances, uncertainties, and actuator faults in chaotic SCNs, utilizing super-twisting algorithms in \cite{Liu2022Distributed}. The work in \cite{Fanti2012} employs distributed consensus algorithms for task allocation in SCM, formulating the problem as a discrete consensus process to minimize task costs and improve coordination among supply chain actors. In manufacturing processes, distributed control of stock levels using consensus algorithms has been proposed, demonstrating global stability of buffer stock levels at consensus values through probabilistic methods \cite{Tsumura2019}.

\paragraph*{\textbf{Contributions}} 
Despite these promising developments, existing distributed control solutions in SCNs assume the underlying communication topology as prespecified/fixed, thereby leaving a crucial avenue to mitigate bullwhip and ripple effects unexplored. 
Motivated by this idea, in this paper, we propose to co-design the distributed controllers and the communication topology to ensure SCN consensus and optimize communication topology while also minimizing bullwhip and ripple effects. 
In particular, we propose a dissipativity-based control solution that holistically addresses these challenges in an energy-efficient manner.  
To the best of our knowledge, dissipativity theory, which is well known for energy-efficient control solutions \cite{Kottenstette2014}, has not been applied before to control problems in SCNs. 
We exploit the inherent dissipativity properties of supply chains to efficiently design large-scale SCNs, leveraging the compositionality of dissipativity \cite{Arcak2022}. 
By formulating the problem using linear matrix inequalities (LMIs), we enable computationally efficient co-design of distributed controllers and communication topology \cite{Lofberg2004,Boyd1994}. Additionally, our dissipativity- and LMI-based approach offers scalability, decentralizability, and compositionality - enhancing the overall solution's practicality.
To showcase the effectiveness of the proposed control solution, we implemented a generic SCN simulator\footnote{Publicly available at: \href{https://github.com/shiran27/SupplyChainSimulator}{https://github.com/shiran27/SupplyChainSimulator}} 
along with several standard control solutions to provide an extensive comparison study under diverse challenging scenarios.

\paragraph*{\textbf{Organization}}
This paper is organized as follows. In Section \ref{Sec:Preliminaries}, we provide several preliminary concepts and results related to dissipativity, networked systems, and topology design. Section \ref{Sec:Consensus} formulates the inventory consensus control problem in SCNs. In Section \ref{Sec:ConsensusSolution} we provide the proposed dissipativity-based control and topology co-design solution. Finally, Section \ref{Sec:SimulationResults} discusses several simulation results before concluding the paper in Sec. \ref{Sec:Conclusion}.

\paragraph*{\textbf{Notation}}
The sets of real and natural numbers are denoted by $\R$ and $\N$, respectively. An $n$-dimensional real vector is denoted by $\R^n$. We define $\N_N\triangleq\{1,2,\ldots,N\}$ where $N\in\N$.  An $n\times m$ block matrix $A$ is represented as $A=[A_{ij}]_{i\in\N_n, j\in\N_m}$ where $A_{ij}$ is the $(i,j)$\tsup{th} block of $A$. $[A_{ij}]_{j\in \N_m}$ represents a block row matrix and $\diag([A_{ii}]_{i\in\N_n})$ represents a block diagonal matrix. Note that, all the indexing-related subscripts may also be written as superscripts. The transpose of a matrix $A$ is denoted by $A^\T$ and $(A^\T)^{-1} = A^{-\T}$. The zero and identity matrices are denoted by $\0$ and $\I$, respectively (dimensions will be clear from the context). A symmetric positive definite (semi-definite) matrix $A\in\R^{n\times n}$ is represented as $A=A^\T>0$ ($A=A^\T \geq 0$). Unless stated otherwise, $A>0 \iff A=A^\T>0$. 
The symbol $\bigstar$ is used to represent conjugate blocks in symmetric matrices, and $\H(A) \triangleq A+A^\T$. 
Given sets $A$ and $B$, $A\backslash B$ indicates the set subtraction operation. 
$\mb{1}_{\{\cdot\}}$ denotes the indicator function, and   
$\mb{1}_n$ denotes the vector of ones in $\R^n$. $\text{randi}(x,y)$ denotes a uniformly distributed random integer between $x$ and $y$, and $\mathcal{N}(\mu,\sigma)$ denotes the random normal distribution with mean $\mu$ and standard deviation $\sigma$.

\section{Preliminaries}\label{Sec:Preliminaries}

\subsection{Dissipativity}

Consider the discrete-time dynamical system  
\begin{equation}\label{Eq:GeneralSystem}
\begin{aligned}
    x(t+1) = f(x(t),u(t)),\\
    y(t) = h(x(t),u(t)),
    \end{aligned}
\end{equation}
where $x(t)\in\R^{n_x}$, $u(t)\in \R^{n_u}$, $y(t)\in\R^{n_y}$ and $t\in\N$ are the state, input, output and time, respectively. Suppose there exists a set of equilibrium states $\X\subset \R^{n_x}$ where for every $x^*\in\X$ there is a unique $u^*\in\R^{n_u}$ satisfying $x^* = f(x^*,u^*)$, and both $u^*$ and $y^*\triangleq h(x^*,u^*)$ are implicit functions of $x^*$.

To analyze the dissipativity properties of \eqref{Eq:GeneralSystem} without explicit knowledge of its equilibrium points $\X$, we recall the \emph{equilibrium-independent dissipativity} (EID) concept \cite{Arcak2022}. Note that this EID concept includes the conventional dissipativity property \cite{Willems1972a} when $\X=\{\0\}$ and for $x^*=\0 \in\X$, the corresponding $u^*=\0$ and $y^*=\0$.

\begin{definition}\label{Def:EID}
The system \eqref{Eq:GeneralSystem} is EID (from input $u$ to output $y$) under the supply rate function $s:\R^{n_u}\times\R^{n_y}\rightarrow \R$ if there exists a storage function $V: \R^{n_x} \times \X \rightarrow \R$ such that $V(x^*,x^*)=0$, $V(x(t),x^*)>0$ and 
$$
V(x(t+1),x^*) - V(x(t),x^*) 
\leq  s(u(t)-u^*,y(t)-y^*),
$$
for all $(u(t),t,x(1),x^*) \in \R^{n_u}\times\N\times\R^{n_x}\times \X$.
\end{definition}

Based on the used supply rate function $s(\cdot,\cdot)$, the above-defined EID property can be specialized. Following \cite{WelikalaP52022}, we use a quadratic supply rate function determined by a coefficients matrix $X=X^\T \in\R^{(n_u+n_y)\times(n_u+n_y)}$, to define a specialized EID property we named the $X$-EID property.

\begin{definition}\label{Def:X-EID}
The system \eqref{Eq:GeneralSystem} is $X$-EID where $X = X^\T \triangleq [X^{kl}]_{k,l\in\N_2}$ if it is EID under the supply rate function:
\begin{equation*}
    s(u-u^*, y-y^*) \triangleq 
    \bm{u-u^* \\ y-y^*}^\T 
    \bm{X^{11} & X^{12}\\X^{21} & X^{22}}
    \bm{u-u^*\\y-y^*}.
\end{equation*}
\end{definition}

Moreover, based on the choice of the coefficients matrix $X$, the above-defined $X$-EID property can represent several properties of interest as given in the following remark. 

\begin{remark}\label{Rm:X-DissipativityVersions}
If the system \eqref{Eq:GeneralSystem} is $X$-EID with:  
\begin{enumerate}
\item $X = \bm{\0 & \frac{1}{2}\I \\ \frac{1}{2}\I & \0}$, then it is \emph{passive};
\item $X = \bm{-\nu\I & \frac{1}{2}\I \\ \frac{1}{2}\I & -\rho\I}$, then it is \emph{strictly passive} ($\nu$ and $\rho$ respectively are input feedforward and output feedback passivity indices) \cite{WelikalaP42022};
\item $X = \bm{\gamma^2\I & \0 \\ \0 & -\I}$, then it is finite-gain \emph{$L_2$-stable} ($\gamma$ is the $L_2$-gain) \cite{WelikalaP42022};
\item $X = \bm{-a\I & \frac{a+b}{2b}\I \\ \frac{a+b}{2b}\I & -\frac{1}{b}\I}$ (or $X = \bm{-ab\I & \frac{a+b}{2}\I \\ \frac{a+b}{2}\I & -\I}$) with $b>a$ (or $b>a$ and $b>0$), then it is sector bounded ($a$ and $b$ are sector bound parameters) \cite{Kottenstette2014};
\end{enumerate}
in an equilibrium-independent manner. 
For notational convenience, the above Cases 2 and 3 are also denoted as the system \eqref{Eq:GeneralSystem} being IF-OFP($\nu,\rho$) and L2G($\gamma$), respectively.  
\end{remark}

A necessary and sufficient condition for \eqref{Eq:GeneralSystem} to be $X$-EID can be found in the form of a linear matrix inequality (LMI) problem when \eqref{Eq:GeneralSystem} is linear and time-invariant.

\begin{proposition}\label{Pr:LTISystemXDisspativity}
The linear time-invariant (LTI) system
\begin{equation}\label{Eq:Pr:LTISystemXDisspativity1}
\begin{aligned}
    x(t+1) =&\ A x(t) + B u(t),\\
    y(t) =&\ Cx(t) +Du(t),
\end{aligned}
\end{equation}
is $X$-EID (from input $u$ to output $y$) if there exists a matrix $P\in\R^{n_x\times n_x}$ such that $P>0$ and 
\begin{equation}
\nonumber
\begin{aligned}
\label{Eq:Pr:LTISystemXDisspativity2}
\scriptsize
\bm{
P & PA & PB \\
A^\T P & P + C^\T X^{22} C & C^\T X^{21} + C^\T X^{22} D\\
B^\T P & X^{12} C + D^\T X^{22} C & X^{11} + D^\T X^{21} + X^{12} D + D^\T X^{22}D
}
\\ \normalsize
\geq 0.
\end{aligned}
\end{equation}
\end{proposition}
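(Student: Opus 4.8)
The plan is to reduce the $X$-EID property to a dissipation inequality along trajectories, choose the storage function as a quadratic form in the state deviation, and then show that the quadratic dissipation inequality is exactly the Schur-complement companion of the stated LMI. First I would invoke Definition~\ref{Def:X-EID}: since the system is LTI, for any equilibrium $x^*\in\X$ the associated $u^*$ satisfies $x^* = Ax^* + Bu^*$ and $y^* = Cx^* + Du^*$, so the shifted variables $\tilde{x}(t) \triangleq x(t)-x^*$, $\tilde{u}(t)\triangleq u(t)-u^*$, $\tilde{y}(t)\triangleq y(t)-y^*$ obey the \emph{same} LTI dynamics $\tilde{x}(t+1) = A\tilde{x}(t) + B\tilde{u}(t)$, $\tilde{y}(t) = C\tilde{x}(t) + D\tilde{u}(t)$. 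I would take the candidate storage function $V(x,x^*) \triangleq (x-x^*)^\T P (x-x^*)$, which automatically satisfies $V(x^*,x^*)=0$ and, because $P>0$, also $V(x(t),x^*)>0$ for $x(t)\neq x^*$. The remaining requirement from Definition~\ref{Def:EID} is then $V(x(t+1),x^*) - V(x(t),x^*) \le s(\tilde u(t),\tilde y(t))$ for all admissible data.

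Next I would expand both sides in terms of the stacked vector $\zeta(t) \triangleq \bm{\tilde x(t)\\ \tilde u(t)}$. Substituting the shifted dynamics, $V(x(t+1),x^*) = (A\tilde x + B\tilde u)^\T P (A\tilde x + B\tilde u)$, so
\begin{equation*}
V(x(t+1),x^*) - V(x(t),x^*) = \zeta(t)^\T \bm{A^\T P A - P & A^\T P B\\ B^\T P A & B^\T P B}\zeta(t).
\end{equation*}
On the supply-rate side, writing $\bm{\tilde u\\ \tilde y} = \bm{\0 & \I\\ C & D}\zeta(t)$ gives
\begin{equation*}
s(\tilde u,\tilde y) = \zeta(t)^\T \bm{\0 & \I\\ C & D}^\T \bm{X^{11} & X^{12}\\ X^{21} & X^{22}} \bm{\0 & \I\\ C & D} \zeta(t).
\end{equation*}
Hence the dissipation inequality holds for all $\zeta$ exactly when the difference of these two quadratic forms is negative semidefinite, i.e.
\begin{equation*}
\bm{P - A^\T P A & -A^\T P B\\ -B^\T P A & -B^\T P B} + \bm{\0 & \I\\ C & D}^\T X \bm{\0 & \I\\ C & D} \ge 0.
\end{equation*}
Carrying out the block multiplication with $X = [X^{kl}]$, the $(2,2)$ block becomes $X^{11} + D^\T X^{21} + X^{12}D + D^\T X^{22}D - B^\T P B$, the $(1,1)$ block becomes $P - A^\T P A + C^\T X^{22}C$, and the $(1,2)$ block becomes $-A^\T P B + C^\T X^{21} + C^\T X^{22}D$; this is precisely the Schur complement, with respect to the leading block $P$, of the $3\times 3$ matrix in \eqref{Eq:Pr:LTISystemXDisspativity2}.

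Finally I would close the argument with a Schur-complement step: since $P>0$, the $3\times 3$ block matrix in \eqref{Eq:Pr:LTISystemXDisspativity2} is positive semidefinite if and only if its Schur complement with respect to the $(1,1)$ block $P$ is positive semidefinite, and the latter is exactly the $2\times 2$ inequality derived above, which in turn is equivalent to the dissipation inequality. I do not expect any genuine obstacle here; the only care needed is bookkeeping — verifying that the three blocks of the Schur complement match term-by-term after the $X$-multiplication, and noting that the equivalence is established for \emph{every} $x^*\in\X$ simultaneously because the shifted dynamics and the quadratic storage function are equilibrium-independent, which is what makes the conclusion an EID (rather than merely a dissipativity) statement. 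One minor point worth a sentence: the inequality must hold for all $(\tilde u(t), \tilde x(t))$, and since any pair of deviation values is reachable as $\zeta(t)$ for suitable trajectory data, the pointwise quadratic-form condition over all $\zeta\in\R^{n_x+n_u}$ is indeed both necessary and sufficient.
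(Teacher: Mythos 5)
Your proposal is correct and follows essentially the same route as the paper's proof: shift to the equilibrium-independent error dynamics, take the quadratic storage $\tilde{x}^\T P \tilde{x}$, reduce the dissipation inequality to a quadratic matrix inequality in $(\tilde{x},\tilde{u})$, and convert it to the stated LMI via a Schur complement in the leading block $P$ (the paper invokes its cited lemma $\Gamma-\Phi^\T\Theta\Phi\geq 0 \iff \bigl[\begin{smallmatrix}\Theta & \Theta\Phi\\ \Phi^\T\Theta & \Gamma\end{smallmatrix}\bigr]\geq 0$, which is exactly your Schur-complement step). Your block bookkeeping matches the paper's matrices $V_1-V_2$ and $S$ term by term, so no gap remains.
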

\begin{proof}
See Appendix \ref{App:Pr:LTISystemXDisspativity}.
\end{proof}

The following corollary considers a particular LTI system that includes a local feedback controller $u(t) = Lx(t)$ and provides an LMI problem for synthesizing the local controller to enforce/optimize the $X$-EID property. 

\begin{corollary}\label{Co:LTISystemXDisspativation}
The discrte-time LTI system
\begin{equation}\label{Eq:Co:LTISystemXDisspativation1}
\begin{aligned}
x(t+1) =&\ (A+BL)x(t) + \eta(t),\\
y(t) =&\ Cx(t),   
\end{aligned}
\end{equation}
where $C = [\I \ \0]$ is $X$-EID with $X^{22}<0$ from external input $\eta(t)$ to output $y(t)$ if and only if there exists $P>0$ and $K$ such that 
\begin{equation}
\scriptsize 
\nonumber 
\label{Eq:Co:LTISystemXDisspativation2}
\bm{ 
(-X^{22})^{-1} & \0 & CP & \0 \\
\0 & P & AP+BK & \I \\
P C^\T & PA^\T + K^\T B^\T & P  & PC^\top X^{21}\\
\0 & \I & X^{12} C P & X^{11}} \normalsize  \geq 0
\end{equation}  
and $L=KP^{-1}$.
\end{corollary}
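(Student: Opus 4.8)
\emph{Proof sketch.} The plan is to apply Proposition~\ref{Pr:LTISystemXDisspativity} to the closed-loop system \eqref{Eq:Co:LTISystemXDisspativation1} and then rewrite the resulting matrix inequality --- which is bilinear in the storage matrix and in the gain $L$ --- as the stated LMI via one Schur-complement step followed by one congruence transformation. Concretely, identifying \eqref{Eq:Co:LTISystemXDisspativation1} with \eqref{Eq:Pr:LTISystemXDisspativity1} by $A\mapsto A+BL$, $B\mapsto\I$, $C\mapsto C$ and $D\mapsto\0$, Proposition~\ref{Pr:LTISystemXDisspativity} says that \eqref{Eq:Co:LTISystemXDisspativation1} is $X$-EID from $\eta$ to $y$ whenever there is a matrix $\bar P>0$ with
\[
W \triangleq \bm{ \bar P & \bar P(A+BL) & \bar P \\ (A+BL)^\T\bar P & \bar P + C^\T X^{22}C & C^\T X^{21} \\ \bar P & X^{12}C & X^{11} } \geq 0 .
\]

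The first step is to pull the term $C^\T X^{22}C = -C^\T(-X^{22})C$ out of the $(2,2)$ block. Since $X^{22}<0$, the matrix $(-X^{22})^{-1}$ is well defined and positive definite, so by the Schur-complement lemma $W\geq 0$ is equivalent to the $4\times 4$ block inequality $\tilde W\geq 0$ obtained from $W$ by adjoining $(-X^{22})^{-1}$ as a new leading diagonal block, $\bm{\0 & C & \0}$ as the new first block-row (and its transpose as the new first block-column), and replacing the old $(2,2)$ block $\bar P + C^\T X^{22}C$ by $\bar P$: the Schur complement of $\tilde W$ with respect to its $(1,1)$ block is exactly $W$. This is the single place where the hypothesis $X^{22}<0$ is used.

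The second step linearizes $\tilde W$ by the congruence transformation $\tilde W\mapsto T\tilde W T$ with $T\triangleq\diag(\I,\bar P^{-1},\bar P^{-1},\I)$, which is symmetric and invertible and hence preserves sign-definiteness. Writing $P\triangleq\bar P^{-1}>0$ and $K\triangleq L\bar P^{-1}=LP$, the nonlinear products collapse: $\bar P^{-1}\bar P(A+BL)\bar P^{-1}=AP+BK$, $\bar P^{-1}\bar P=\I$, and $C\bar P^{-1}=CP$, so $T\tilde W T$ becomes precisely the matrix displayed in the statement of Corollary~\ref{Co:LTISystemXDisspativation}, with $L$ recovered as $L=KP^{-1}$. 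Chaining these two equivalences with Proposition~\ref{Pr:LTISystemXDisspativity} yields the ``if'' direction.

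The ``only if'' direction is where the care lies, and it is the step I expect to be the main obstacle. Both manipulations above are equivalences --- a Schur complement about a positive-definite pivot, and a congruence by the invertible $T$ --- so feasibility of the Corollary's LMI in $(P,K)$ with $P>0$ is equivalent to feasibility of $W\geq 0$ in $(\bar P,L)=(P^{-1},KP^{-1})$ with $\bar P>0$. What is still needed is the converse of Proposition~\ref{Pr:LTISystemXDisspativity}, namely that an $X$-EID LTI system necessarily admits such a quadratic certificate; I would obtain this by taking the storage function in Definition~\ref{Def:EID} to be quadratic, $V(x,x^*)=(x-x^*)^\T\bar P(x-x^*)$ (so $V(x^*,x^*)=0$ and $V(x,x^*)>0$ force $\bar P>0$), and verifying that the dissipation inequality for \eqref{Eq:Co:LTISystemXDisspativation1}, with $\eta$ as the input, is exactly $W\geq 0$ --- the standard discrete-time KYP/dissipativity argument. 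The only remaining work is the routine block-matrix bookkeeping in the two transformations, which I would not grind through here.
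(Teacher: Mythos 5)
Your proposal is correct and follows essentially the same route as the paper's own proof: substitute $A\mapsto A+BL$, $B\mapsto\I$, $D\mapsto\0$ in Proposition~\ref{Pr:LTISystemXDisspativity}, pull the $C^\T X^{22}C$ term out of the $(2,2)$ block via a Schur complement with pivot $(-X^{22})^{-1}$ (the only use of $X^{22}<0$), apply the congruence $\diag(\I,P^{-1},P^{-1},\I)$, and linearize with $P\mapsto P^{-1}$, $K=LP$. Your observation that the ``only if'' direction further requires a KYP-type converse of Proposition~\ref{Pr:LTISystemXDisspativity} (necessity of a quadratic storage certificate) is a point the paper's appendix leaves implicit, since it only chains the LMI equivalences, so flagging and sketching it is a reasonable refinement rather than a deviation.
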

\begin{proof}
See Appendix \ref{App:Co:LTISystemXDisspativation}.
\end{proof}

\subsection{Networked Systems}

Let us consider the networked system $\Sigma$ shown in Fig. \ref{Fig:InterconnectedSystems} comprised of $N$ independent discrete-time dynamic subsystems $\Sigma_i,i\in\N_N$ and a static interconnection matrix $M$ that defines how the subsystem inputs and outputs, an exogenous input $w(t)\in\R^{n_w}$ (e.g., disturbance) and an interested output $z(t)\in\R^{n_z}$ (e.g., performance) are interconnected. 

\begin{figure}[!t]
		\centering
		\captionsetup{justification=centering}
		\includegraphics[width=1.7in]{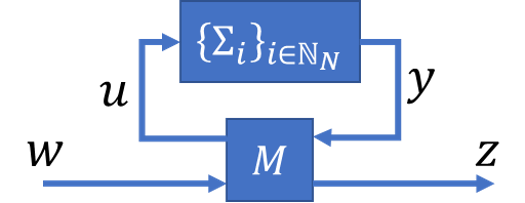}
	\caption{The networked system $\Sigma$.}
	\label{Fig:InterconnectedSystems}
\end{figure}

We make the following assumptions regarding the subsystems. Each subsystem $\Sigma_i, i\in\N_N$: 
(i) follows the dynamics
\begin{equation}
    \begin{aligned}
        x_i(t+1) = f_i(x_i(t),u_i(t)),\\
        y_i(t) = h_i(x_i(t),u_i(t)),
    \end{aligned}
\end{equation}
where $x_i(t)\in\R^{n_{xi}}, u_i(t)\in\R^{n_{ui}}$ and $y_i(t)\in\R^{n_{yi}}$; 
(ii) has a set $\mathcal{X}_i \subset \R^{n_{xi}}$ where for every $x_i^* \in \mathcal{X}_i$, there is a unique $u_i^* \in \R^{n_{ui}}$ satisfying $x_i^* = f_i(x_i^*,u_i^*)$, and both $u_i^*$ and $y_i^* \triangleq h_i(x_i^*,u_i^*)$ are implicit functions of $x_i^*$;  and 
(iii) is $X_i$-EID where $X_i\triangleq [X_i^{kl}]_{k,l\in\N_2}$.

The interconnection matrix $M$ is assumed to be structured as given in the interconnection relationship:
\begin{align}
	\label{Eq:NSC2Interconnection}\bm{u\\z} = M \bm{y\\w} \equiv 
	\bm{M_{uy} & M_{uw} \\ M_{zy} & M_{zw}}\bm{y\\w},
\end{align}
where $u \triangleq [u_i^\T]_{i\in\N_N}^\T \in \R^{n_u}$ and $y \triangleq [y_i^\T]_{i\in\N_N}^\T \in \R^{n_y}$ with $n_u \triangleq \sum_{i\in\N_N} n_{ui}$ and $n_y \triangleq \sum_{i\in\N_N} n_{yi}$. 

The networked system $\Sigma$ is assumed to have a set of equilibrium states $\mathcal{X} \subseteq \bar{\mathcal{X}} \in \R^{n_x}$ where $\bar{\mathcal{X}} \triangleq \mathcal{X}_1 \times \mathcal{X}_2 \times \cdots \times \mathcal{X}_N$ and $n_x \triangleq \sum_{i\in\N_N} n_{xi}$. The subsystem equilibrium points implies that, for each $x^* \triangleq [x_i^{*\T}]^\T_{i\in\N_N}\in\mathcal{X}$, there is a corresponding $u^* \triangleq [u_i^{*\T}]^\T$ and $y^* \triangleq [y_i^{*\T}]^\T$ that are implicit functions of $x^*$. Also, for each $x^*\in\mathcal{X}$, there is a unique $w^*\in\R^{n_w}$ that satisfies $u^* = M_{uy}y^* + M_{uw} w^*$, and $w^*$ and $z^* =  M_{zy}y^* + M_{zw}w^*$ are implicit functions of $x^*$.

\begin{figure*}[!b]
\centering
\hrulefill
\begin{equation}\label{Eq:Pr:NSC2Synthesis2}
\scriptsize 
\bm{
\textbf{X}_p^{11} & \0 & L_{uy} & L_{uw} \\
\0 & -\textbf{Y}^{22} & -\textbf{Y}^{22}M_{zy} & -\textbf{Y}^{22} M_{zw}\\ 
L_{uy}^\T & -M_{zy}^\T\textbf{Y}^{22} & -L_{uy}^\T \textbf{X}^{12}-\textbf{X}^{21}L_{uy}-\textbf{X}_p^{22} & -\textbf{X}^{21}L_{uw}+M_{zy}^\T\textbf{Y}^{21} \\
L_{uw}^\T & -M_{zw}^\T\textbf{Y}^{22} & -L_{uw}^\T \textbf{X}^{12}+\textbf{Y}^{12} M_{zy} &  M_{zw}^\T\textbf{Y}^{21} +  \textbf{Y}^{12}M_{zw} + \textbf{Y}^{11}
} \normalsize >0
\end{equation}
\begin{equation}\label{Eq:Pr:NSC2Synthesis2Alternative} 
\scriptsize 
\bm{
    -\textbf{Y}^{22} & -\textbf{Y}^{22}M_{zy} & -\textbf{Y}^{22}M_{zw} \\
    -M_{zy}^\T\textbf{Y}^{22} & -L_{uy}^\T \textbf{X}^{12} - \textbf{X}^{21}L_{uy} - \textbf{X}_p^{22} + \alpha^2\textbf{X}_p^{11}-\alpha(L_{uy}^\T + L_{uy}) & -\textbf{X}^{21}L_{uw} + M_{zy}^\T \textbf{Y}^{21} + \alpha^2\textbf{X}_p^{11}-\alpha(L_{uy}^\T + L_{uw}) \\
-M_{zw}^\T\textbf{Y}^{22} & -L_{uw}^\T \textbf{X}^{12} + \textbf{Y}^{12}M_{zy} + \alpha^2\textbf{X}_p^{11}-\alpha(L_{uw}^\T + L_{uy}) & M_{zw}^\T \textbf{Y}^{21} +  \textbf{Y}^{12}M_{zw} + \textbf{Y}^{11} + \alpha^2\textbf{X}_p^{11}-\alpha(L_{uw}^\T + L_{uw}) 
} \normalsize >0,  \alpha \in \R
\end{equation}
\end{figure*}

Following \cite{WelikalaP72023,Welikala2023Ax6}, we next provide an LMI problem for synthesizing the interconnection matrix $M$ \eqref{Eq:NSC2Interconnection} such that the networked system $\Sigma$ is $\textbf{Y}$-EID. For this purpose, we first require the following two mild assumptions.  


\begin{assumption}\label{As:NegativeDissipativity}
The desired $\textbf{Y}$-EID specification for the networked system $\Sigma$ is such that $\textbf{Y}^{22}<0$. 
\end{assumption}

\begin{remark}\label{Rm:As:NegativeDissipativity}
According to Rm. \ref{Rm:X-DissipativityVersions}, As. \ref{As:NegativeDissipativity} holds if we require the networked system $\Sigma$ to be: (1) L2G($\gamma$), or (2) IF-OFP($\nu,\rho$) with some $\rho>0$, i.e., $L_2$-stable or passive, respectively. Since it is always desirable to make the networked system $\Sigma$ either $L_2$-stable or passive, As. \ref{As:NegativeDissipativity} is mild.
\end{remark}

\begin{assumption}\label{As:PositiveDissipativity}
Each subsystem $\Sigma_i, i\in\N_N$ in the networked system $\Sigma$ is $X_i$-EID with $X_i^{11} > 0$ or $X_i^{11} < 0$. 
\end{assumption}

\begin{remark}\label{Rm:As:PositiveDissipativity}
According to Rm. \ref{Rm:X-DissipativityVersions}, As. \ref{As:PositiveDissipativity} holds if each subsystem $\Sigma_i,i\in\N_N$ is: (1) L2G($\gamma_i$) (i.e., $L_2$-stable), or (2) IF-OFP($\nu_i,\rho_i$) with $\nu_i>0$ (i.e., passive) or $\nu_i<0$ (i.e., non-passive). If such conditions are not met, local controllers can be used (e.g., via Co. \ref{Co:LTISystemXDisspativation}). Therefore, As.  \ref{As:PositiveDissipativity} is also mild. 
\end{remark}

\begin{proposition}\label{Pr:NSC2Synthesis}
Under Assumptions \ref{As:NegativeDissipativity} and \ref{As:PositiveDissipativity}, the discrete-time networked system $\Sigma$ is $\textbf{Y}$-EID if the interconnection matrix $M$ \eqref{Eq:NSC2Interconnection} is synthesized using the LMI problem:
\begin{equation}\label{Eq:Pr:NSC2Synthesis}
    \begin{aligned}
    \mbox{Find: }& L_{uy}, L_{uw}, M_{zy}, M_{zw}, \{p_i: i\in\N_N\}\\
    \mbox{Sub. to: }& p_i > 0, \forall i\in\N_N, \mbox{ and }\\
    & \begin{cases} 
    \eqref{Eq:Pr:NSC2Synthesis2} \ \ \ \mbox{if }      &X_i^{11}>0, \forall i\in\N_N,\\ 
    \eqref{Eq:Pr:NSC2Synthesis2Alternative} \ \ \ \mbox{else if}  &X_i^{11}<0, \forall i\in\N_N, 
    \end{cases}
    \end{aligned}
\end{equation}
where $\textbf{X}^{12} \triangleq \diag((X_i^{11})^{-1}X_i^{12}:i\in\N_N)$, $\textbf{X}^{21} \triangleq (\textbf{X}^{12})^\T$ with 
$M_{uy} \triangleq (\textbf{X}_p^{11})^{-1} L_{uy}$ and $M_{uw} \triangleq  (\textbf{X}_p^{11})^{-1} L_{uw}$.
\end{proposition}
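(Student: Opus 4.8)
The plan is to derive the $\textbf{Y}$-EID property of the networked system $\Sigma$ by combining the dissipativity certificates of the individual subsystems with the interconnection constraint, in the spirit of the standard dissipativity-based compositional analysis. First I would collect the $X_i$-EID inequalities for the $N$ subsystems. Writing $u = [u_i^\T]^\T$, $y = [y_i^\T]^\T$, and deviations $\tilde u \triangleq u - u^*$, $\tilde y \triangleq y - y^*$, the sum of the subsystem supply-rate inequalities yields, along trajectories, a bound of the form $\sum_i (V_i(x_i(t+1),x_i^*) - V_i(x_i(t),x_i^*)) \leq [\tilde u^\T \ \tilde y^\T]\, \textbf{X}\, [\tilde u^\T \ \tilde y^\T]^\T$ where $\textbf{X} = \diag(X_i)$ with blocks $\textbf{X}^{kl} = \diag(X_i^{kl}:i\in\N_N)$. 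Taking $V \triangleq \sum_i V_i$ as the candidate storage function for $\Sigma$, it remains to show $V(x(t+1),x^*) - V(x(t),x^*) \leq [\tilde w^\T \ \tilde z^\T]\, \textbf{Y}\, [\tilde w^\T \ \tilde z^\T]^\T$ with $\tilde w \triangleq w-w^*$, $\tilde z \triangleq z - z^*$. By the S-procedure, a sufficient condition is that the quadratic form $[\tilde u^\T \ \tilde y^\T]\,\textbf{X}\,[\tilde u^\T \ \tilde y^\T]^\T - [\tilde w^\T \ \tilde z^\T]\,\textbf{Y}\,[\tilde w^\T \ \tilde z^\T]^\T \leq 0$ whenever the interconnection relation $\tilde u = M_{uy}\tilde y + M_{uw}\tilde w$, $\tilde z = M_{zy}\tilde y + M_{zw}\tilde w$ holds (these follow from \eqref{Eq:NSC2Interconnection} by subtracting the equilibrium relations). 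Substituting $\tilde u$ and $\tilde z$ in terms of the free variables $(\tilde y, \tilde w)$ turns this into a single matrix inequality in $(\tilde y, \tilde w)$, which is the inequality to be enforced.

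The key algebraic manoeuvre is to handle the $\textbf{X}^{11}$-block. Under Assumption \ref{As:PositiveDissipativity} with $X_i^{11}>0$, so that $\textbf{X}^{11}>0$, I would complete the square in $\tilde u$ (equivalently apply a congruence / Schur complement with respect to the $\textbf{X}^{11}$ block). This is what produces the change of variables $L_{uy} = \textbf{X}^{11}_p M_{uy}$, $L_{uw} = \textbf{X}^{11}_p M_{uw}$, and the normalized quantities $\textbf{X}^{12} = \diag((X_i^{11})^{-1}X_i^{12})$ appearing in \eqref{Eq:Pr:NSC2Synthesis2}: multiplying the subsystem storage functions by scalars $p_i>0$ (so $V = \sum_i p_i V_i$, legitimate since each $p_i V_i$ is still a valid storage function for the same supply rate scaled by $p_i$) rescales $X_i \mapsto p_i X_i$, and absorbing $\textbf{X}^{11}_p \triangleq \diag(p_i X_i^{11})$ into $M_{uy}, M_{uw}$ linearizes the otherwise bilinear terms $M_{uy}^\T \textbf{X}^{11} M_{uy}$. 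After this substitution, assembling the Schur complement of the resulting quadratic form in the block variables $[\ \star\ ; \tilde y; \tilde w]$ gives exactly the $4\times 4$ block LMI \eqref{Eq:Pr:NSC2Synthesis2}; the first diagonal block $\textbf{X}^{11}_p$ and the off-diagonal $L_{uy}, L_{uw}$ come from the Schur pivot, the second block row/column $-\textbf{Y}^{22}, -\textbf{Y}^{22}M_{zy}, -\textbf{Y}^{22}M_{zw}$ comes from eliminating $\tilde z$ via Assumption \ref{As:NegativeDissipativity} ($\textbf{Y}^{22}<0$ lets us do the corresponding Schur step), and the bottom-right $2\times2$ principal submatrix collects the remaining $\textbf{X}, \textbf{Y}$ cross terms. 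For the case $X_i^{11}<0$, the same completion-of-square step changes sign, so one instead bounds the indefinite cross term using the elementary inequality $-(L_{uy}-\alpha\I)^\T\textbf{X}^{11}_p{}^{-1}(L_{uy}-\alpha\I)\ge 0$-type relaxation with a free scalar $\alpha\in\R$, which is precisely the origin of the $\alpha^2\textbf{X}^{11}_p - \alpha(L^\T + L)$ terms in \eqref{Eq:Pr:NSC2Synthesis2Alternative}; this yields a sufficient (no longer necessary) condition, consistent with the "if" in the proposition.

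The main obstacle I anticipate is bookkeeping rather than conceptual: correctly tracking how the per-subsystem scalars $p_i$ propagate through the normalized blocks $\textbf{X}^{12}, \textbf{X}^{21}$ (note these are $p_i$-independent because $(p_iX_i^{11})^{-1}(p_iX_i^{12}) = (X_i^{11})^{-1}X_i^{12}$), verifying that the Schur-complement pivots are legitimate (i.e., that $\textbf{X}^{11}_p>0$ and $-\textbf{Y}^{22}>0$ are exactly the hypotheses supplied by Assumptions \ref{As:PositiveDissipativity} and \ref{As:NegativeDissipativity}), and confirming that every substitution $\tilde u \to M_{uy}\tilde y + M_{uw}\tilde w$, $\tilde z \to M_{zy}\tilde y + M_{zw}\tilde w$ has been consistently applied so the final inequality is genuinely in the free variables $(\star, \tilde y, \tilde w)$ only. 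I would also note that feasibility of \eqref{Eq:Pr:NSC2Synthesis} automatically yields $\textbf{X}^{11}_p = \diag(p_iX_i^{11}) \succ 0$ invertible, so that $M_{uy} = (\textbf{X}^{11}_p)^{-1}L_{uy}$ and $M_{uw} = (\textbf{X}^{11}_p)^{-1}L_{uw}$ are well defined, closing the argument. The detailed derivation is deferred to the appendix, following \cite{WelikalaP72023,Welikala2023Ax6}.
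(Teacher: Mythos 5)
Your proposal follows essentially the same route as the paper's proof: the scaled storage function $V=\sum_i p_i V_i$, summing the subsystem $X_i$-EID inequalities, substituting the deviation-variable interconnection relations $\tilde u = M_{uy}\tilde y + M_{uw}\tilde w$, $\tilde z = M_{zy}\tilde y + M_{zw}\tilde w$ into $\Delta V \le s(\tilde w,\tilde z)$, and then linearizing via $[L_{uy}\ L_{uw}]=\textbf{X}_p^{11}[M_{uy}\ M_{uw}]$ with Schur complements on the $\textbf{X}_p^{11}>0$ and $-\textbf{Y}^{22}>0$ pivots, plus the $\alpha$-relaxation for the $X_i^{11}<0$ case. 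The only nit is that the relaxation should be written as $-(M_{uy}-\alpha\I)^\T\textbf{X}_p^{11}(M_{uy}-\alpha\I)\ge 0$ (weighted by $\textbf{X}_p^{11}$, not $(\textbf{X}_p^{11})^{-1}$ applied to $L_{uy}$) so that it yields exactly the $\alpha^2\textbf{X}_p^{11}-\alpha(L^\T+L)$ terms of \eqref{Eq:Pr:NSC2Synthesis2Alternative}.
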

\begin{proof}
See Appendix \ref{App:Pr:NSC2Synthesis}.
\end{proof}

\begin{remark}
The synthesis technique for the interconnection matrix $M$ given in Prop. \ref{Pr:NSC2Synthesis} can be used even when $M$ is partially known/fixed (as it will only reduce the number of variables in the LMI problem \eqref{Eq:Pr:NSC2Synthesis}). Further, it is independent of the equilibrium points of the considered networked system. Furthermore, it can be easily modified to handle scenarios where a subset of subsystems have $X_i^{11}>0$ while the others have $X_i^{11}<0$. 
\end{remark}

\section{Inventory Consensus Control Problem Formulation}\label{Sec:Consensus}

As mentioned in the Introduction, coordinating several parallel supply chains to maintain a consensus improves the overall performance and robustness of the supply chain network (SCN) \cite{Li2020}. Maintaining such a consensus can also be viewed as a synchronizing action across the SCN - which is particularly vital when individual supply chains are physically identical and supply the same product. Another scenario where consensus/synchrony is crucial is when different supply chains are dedicated to providing different products but at some fixed ratio. 
In this section, we formulate the inventory consensus control problem as a networked system control problem so that the dissipativity-based control and topology co-design technique in Prop. \ref{Pr:NSC2Synthesis} can be applied.

\begin{figure}[!t]
	\centering
	\includegraphics[width=\columnwidth]{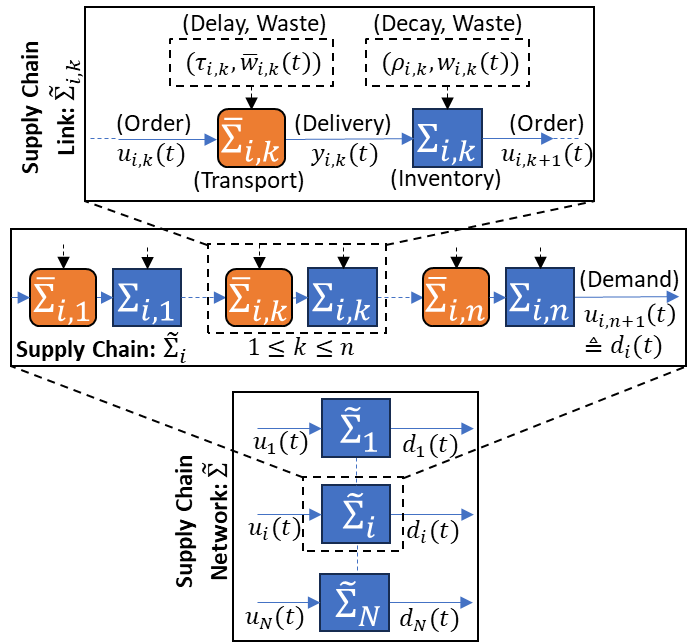}
	\caption{The composition of the supply chain network.}
	\label{Fig:SupplyChainConsensusProblem}
\end{figure}

\subsection{\textbf{Supply Chain Dynamics}} 

As shown in Fig. \ref{Fig:SupplyChainConsensusProblem}, we consider a SCN comprised of $N$ parallel supply chains $\{\tilde{\Sigma}_i:i\in\N_N\}$, where each supply chain $\tilde{\Sigma}_i$, $i\in\N_N$ is comprised of $n$ sequentially connected supply chain links $\{\tilde{\Sigma}_{i,k},k\in\N_n\}$. Each supply chain link $\tilde{\Sigma}_{i,k}, i\in\N_N, k\in\N_n$ includes a transportation process $\bar{\Sigma}_{i,k}$ and a subsequent inventory/storage process $\Sigma_{i,k}$.

At each supply chain link $\tilde{\Sigma}_{i,k}, i\in\N_N, k\in\N_n$, the inventory level, placed order magnitude, and received delivery amount at time step $t\in\N$ are denoted by $x_{i,k}(t)$, $u_{i,k}(t)$, and $y_{i,k}(t)$, respectively, and the transportation delay and inventory decay rate parameters are denoted by $\tau_{i,k}$ and $\rho_{i,k}$ respectively. Using these notations, for each supply chain link $\tilde{\Sigma}_{i,k}, i\in\N_N, k\in\N_n$, the transportation dynamics can be written as 
\begin{equation}\label{Eq:LinkTransportationDynamics}
    y_{i,k}(t) = u_{i,k}(t-\tau_{i,k}) - \bar{w}_{i,k}(t), 
\end{equation} 
where $\bar{w}_{i,k}(t)$ is the observed transportation loss (waste) at time step $t\in\N$, and the inventory dynamics can be written as 
\begin{equation}\label{Eq:LinkInventoryDynamics}
x_{i,k}(t+1) = (1-\rho_{i,k})x_{i,k}(t) + y_{i,k}(t) - u_{i,k+1}(t) - w_{i,k}(t),
\end{equation}
where $w_{i,k}(t)$ is the observed inventory loss (waste) at time step $t\in\N$. Note that, for $k=n$ (i.e., for the supply chain link $\tilde{\Sigma}_{i,n}$ at the edge of the supply chain $\tilde{\Sigma}_i$), we consider  
\begin{equation}
u_{i,n+1}(t) \triangleq d_{in}(t),
\end{equation}
where $d_{in}(t)$ represents the uncontrollable customer demand. Besides, the remaining order magnitudes $\{u_{i,k}(t): k\in\N_n\}$ are considered as the controllable inputs.

The transportation delays $\{\tau_{i,k}: i\in\N_N, k\in\N_N\}$ are assumed to be time-invariant and integer-valued (i.e., an integer multiple of the time step size used). Similarly, the inventory decay rates (also called the ``perish rates,'' and typically falls in the interval $[0,1]$) are assumed to be time-invariant. Note that these two assumptions are mild as any violations can be compensated by the respective loss (waste) terms introduced in the transportation \eqref{Eq:LinkTransportationDynamics} and inventory \eqref{Eq:LinkInventoryDynamics} dynamics. Note also that these loss terms can also compensate for any modeling imperfections in \eqref{Eq:LinkTransportationDynamics}-\eqref{Eq:LinkInventoryDynamics}.

Using the previous assumption that $\tau_{i,k}\in\N$, we next express the transportation dynamics \eqref{Eq:LinkTransportationDynamics} in an LTI system form. For this purpose, we introduce transportation states:
\begin{equation}\label{Eq:LinkTransportationStates}
\begin{aligned}
\bar{x}_{i,k}^{(1)}(t) \triangleq&\ u_{i,k}(t-\tau_{i,k}),\\
\bar{x}_{i,k}^{(2)}(t) \triangleq&\ u_{i,k}(t-\tau_{i,k}+1),\\
\vdots&\ \\
\bar{x}_{i,k}^{(\tau_{i,k})}(t) \triangleq&\ u_{i,k}(t-1).
\end{aligned}
\end{equation}
Then, by vectorizing these states $\bar{x}_{i,k}(t) \triangleq [\bar{x}_{i,k}^{(l)}(t)]_{l\in\N_{\tau_{i,k}}}^\T$, \eqref{Eq:LinkTransportationDynamics} can be expressed as a $\tau_{i,k}$-dimensional LTI system:
\begin{equation}\label{Eq:LinkTransportationDynamics2}
\begin{aligned}
\bar{x}_{i,k}(t+1) =&\ \bar{A}_{i,k} \bar{x}_{i,k}(t) + \bar{B}_{i,k}u_{i,k}(t),\\
y_{i,k}(t) =&\ \bar{C}_{i,k}\bar{x}_{i,k}(t) - \bar{w}_{i,k}(t),
\end{aligned}
\end{equation}
where 
\begin{equation*}
\begin{aligned}
\bar{A}_{i,k} \triangleq&\  
\bm{
0&1&0&\cdots&0\\
0&0&1&\cdots&0\\
\vdots & \vdots& \vdots& \ddots & \vdots\\
0&0&0&\cdots&1\\
0&0&0&\cdots&0}, \ \ 
\bar{B}_{i,k} \triangleq \bm{0\\0\\\vdots\\0\\1},\\
\bar{C}_{i,k} \triangleq&\ \bm{1&0&\cdots&0&0}. 
\end{aligned}
\end{equation*}

We are now ready to express the dynamics of an entire supply chain $\tilde{\Sigma}_i,i\in\N_N$ in a compact form by considering vectorized versions of \eqref{Eq:LinkInventoryDynamics} and \eqref{Eq:LinkTransportationDynamics2} (across all $k\in\N_n$). First, let us introduce vectorized transportation states, inventory levels, order magnitudes, delivery amounts, transportation losses, inventory losses, and customer demands associated with the supply chain $\tilde{\Sigma}_i, i\in\N_N$ as: 
$\bar{x}_i(t) \triangleq [\bar{x}_{i,k}^\T(t)]_{k\in\N_n}^\T$, 
$x_i(t) \triangleq [x_{i,k}(t)]_{k\in\N_n}^\T$, 
$u_i(t) \triangleq [u_{i,k}(t)]_{k\in\N_n}^\T$,
$y_i(t) \triangleq [y_{i,k}(t)]_{k\in\N_n}^\T$,
$\bar{w}_i(t) \triangleq [\bar{w}_{i,k}]_{k\in\N_n}^\T$, 
$w_i(t) \triangleq [w_{i,k}(t)]_{k\in\N_n}^\T$ and 
$d_{i}(t) \triangleq [0, 0, \cdots, 0, d_{in}(t)]^\T$,
respectively. Using these notations, for each supply chain $\tilde{\Sigma}_i, i\in\N_N$, the transportation dynamics can be written as 
\begin{equation}\label{Eq:ChainTransportationDynamics}
\begin{aligned}
\bar{x}_{i}(t+1) =&\ \bar{A}_{i} \bar{x}_i(t) + \bar{B}_{i} u_i(t),\\
y_i(t) =&\ \bar{C}_i \bar{x}_i(t) - \bar{w}_i(t).  
\end{aligned}
\end{equation}
where 
$\bar{A}_{i} \triangleq \diag(\bar{A}_{i,k}:k\in\N_n)$, 
$\bar{B}_{i} \triangleq \diag(\bar{B}_{i,k}:k\in\N_n)$ and 
$\bar{C}_{i} \triangleq \diag(\bar{C}_{i,k}:k\in\N_n)$,
and the inventory dynamics can be written as 
\begin{equation}\label{Eq:ChainInventoryDynamics}
x_i(t+1) = A_i x_i(t) + y_i(t) - B_i u_i(t) - d_i(t) - w_i(t),
\end{equation} 
where 
$A_i \triangleq \diag((1-\rho_{i,k}):k\in\N_n)$ and 
$$   
B_i \triangleq 
\bm{
    0&1&0&\cdots&0\\
    0&0&1&\cdots&0\\
    \vdots & \vdots& \vdots& \ddots & \vdots\\
    0&0&0&\cdots&1\\
    0&0&0&\cdots&0}.
$$
Finally, it is worth noting the dimensions: $\bar{x}_i(t) \in \R^{\tau_i}$ where $\tau_i\triangleq \sum_{k\in\N_n}\tau_{i,k}$ (i.e., the total delay in the supply chain $\tilde{\Sigma}_i$), and $u_i(t), y_i(t), \bar{w}_i(t), x_i(t), d_i(t), w_i(t) \in \R^n$.

\subsection{\textbf{Steady-State Control}} 
For generality, the disturbance vectors $\bar{w}_i(t)$, $w_i(t)$ and $d_i(t)$ in \eqref{Eq:ChainTransportationDynamics}-\eqref{Eq:ChainInventoryDynamics} are assumed to be random vectors with known fixed means $\overline{\bar{w}}_i$, $\overline{w}_i$ and $\overline{d}_i$, respectively. And thus, we can express $\bar{w}_i(t) = \overline{\bar{w}}_i + \tilde{\bar{w}}_i(t)$, $w_i(t) = \overline{w}_i + \tilde{w}_i(t)$ and $d_i(t) = \overline{d}_i + \tilde{d}_i(t)$, where $\tilde{\bar{w}}_i(t)$, $\tilde{w}_i(t)$ and $\tilde{d}_i(t)$ represent the corresponding zero mean disturbance vectors, respectively.

On the other hand, let us denote the steady-state values of the transportation state vector $\bar{x}_i(t)$ in \eqref{Eq:ChainTransportationDynamics} and the inventory state vector $x_i(t)$ in \eqref{Eq:ChainInventoryDynamics} by $\overline{\bar{x}}_i$ and  $\overline{x}_i$, respectively. It is worth mentioning that the steady-state inventory state vector $\overline{x}_i$ is a design variables that need to be prespecified. For example, in strategies like ``just-in-time,'' the steady-state inventory values are selected such that  $\overline{x}_i=\0$ \cite{Li2020}.

To support the said mean disturbance levels and steady-state inventory levels, each supply chain $\tilde{\Sigma}_i, i\in\N_N$ is assumed to select its control input (order magnitudes) as
\begin{equation}\label{Eq:SteadyStateControl0}
    u_i(t) = \overline{u}_i + \tilde{u}_i(t),
\end{equation} 
where $\overline{u}_i \triangleq [\overline{u}_{i,k}]_{k\in\N_n}^\top$ is the steady-state control input (i.e., the steady-state ordering magnitudes) and $\tilde{u}_i(t)$ is a virtual control law. The details of $\tilde{u}_i(t)$ will be provided later on. However, the following proposition provides the steady-state control input $\overline{u}_i$ and the steady-state transportation state $\overline{\bar{x}}_i$. 

\begin{proposition}\label{Pr:SteadyStateControl}
For each supply chain $\tilde{\Sigma}_i, i\in\N_N$ described by \eqref{Eq:ChainTransportationDynamics}-\eqref{Eq:ChainInventoryDynamics}, under fixed deterministic disturbances (i.e.,  $\bar{w}_i(t) = \overline{\bar{w}}_i$, $w_i(t) = \overline{w}_i$, and $d_i(t) = \overline{d}_i, \forall t\in\N$), there exists an equilibrium point such that 
\begin{equation}
\begin{aligned}
\bar{x}_i(t) =&\ \overline{\bar{x}}_i = \bar{D}_i\overline{u}_i, \\
x_i(t) =&\ \overline{x}_i \mbox{ (prescribed)},
\end{aligned}
\end{equation}
under steady-state control input $\overline{u}_i \triangleq [\overline{u}_{i,k}]_{k\in\N_n}^\top$ with 
\begin{equation}\label{Eq:SteadyStateControlLink}
\overline{u}_{i,k} = \sum_{j=k}^{n} (\rho_{i,k}\overline{x}_{i,k}+\overline{\bar{w}}_{i,k} + \overline{w}_{i,k}) + \overline{d}_i,\quad \forall k\in\N_n
\end{equation}
and $\tilde{u}_i(t) = \0$ in \eqref{Eq:SteadyStateControl0} with $\bar{D}_i \triangleq \diag(\mb{1}_{\tau_{i,k}}:k\in\N_n)$.
\end{proposition}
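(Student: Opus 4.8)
The statement only asserts the existence of an equilibrium together with an explicit formula for the steady-state order $\overline{u}_i$, so the plan is constructive: fix $\tilde{u}_i(t)=\0$ in \eqref{Eq:SteadyStateControl0} (hence $u_i(t)\equiv\overline{u}_i$) and all disturbances at their means, then solve the two fixed-point equations obtained by setting $\bar{x}_i(t+1)=\bar{x}_i(t)$ in \eqref{Eq:ChainTransportationDynamics} and $x_i(t+1)=x_i(t)$ in \eqref{Eq:ChainInventoryDynamics}, reading off $\overline{\bar{x}}_i$ and the required $\overline{u}_i$ along the way.

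First I would handle the transportation cascade \eqref{Eq:ChainTransportationDynamics}. The equilibrium condition is $\overline{\bar{x}}_i = \bar{A}_i\overline{\bar{x}}_i + \bar{B}_i\overline{u}_i$; since $\bar{A}_i=\diag(\bar{A}_{i,k}:k\in\N_n)$ and $\bar{B}_i=\diag(\bar{B}_{i,k}:k\in\N_n)$, this decouples over links, and the shift structure of $\bar{A}_{i,k}$ together with $\bar{B}_{i,k}=[0,\dots,0,1]^\T$ forces $\overline{\bar{x}}_{i,k}=\mb{1}_{\tau_{i,k}}\overline{u}_{i,k}$ for each $k$ (each delayed copy of the order equals the constant order). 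Stacking over $k$ gives $\overline{\bar{x}}_i = \bar{D}_i\overline{u}_i$ with $\bar{D}_i=\diag(\mb{1}_{\tau_{i,k}}:k\in\N_n)$, and the output equation then yields the steady-state delivery vector $\bar{y}_i = \bar{C}_i\overline{\bar{x}}_i - \overline{\bar{w}}_i = \overline{u}_i - \overline{\bar{w}}_i$, i.e. $\bar{y}_{i,k}=\overline{u}_{i,k}-\overline{\bar{w}}_{i,k}$ for all $k\in\N_n$, because $\bar{C}_{i,k}=[1,0,\dots,0]$ picks out the first (already $\overline{u}_{i,k}$) component.

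Next I would impose equilibrium on the inventory dynamics \eqref{Eq:ChainInventoryDynamics}. With $A_i=\diag(1-\rho_{i,k})$ and $B_i$ the upper-shift matrix, the $k$-th component of $\overline{x}_i = A_i\overline{x}_i + \bar{y}_i - B_i\overline{u}_i - \overline{d}_i - \overline{w}_i$ reads $\rho_{i,k}\overline{x}_{i,k}=\bar{y}_{i,k}-\overline{u}_{i,k+1}-\overline{w}_{i,k}$ for $k\in\N_{n-1}$ and $\rho_{i,n}\overline{x}_{i,n}=\bar{y}_{i,n}-\overline{d}_{in}-\overline{w}_{i,n}$ for $k=n$ (here $(B_i\overline{u}_i)_n=0$ and the downstream "order" $u_{i,n+1}\Def d_{in}$ enters through $\overline{d}_i$). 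Substituting $\bar{y}_{i,k}=\overline{u}_{i,k}-\overline{\bar{w}}_{i,k}$ converts these into the backward recursion $\overline{u}_{i,k}=\overline{u}_{i,k+1}+\rho_{i,k}\overline{x}_{i,k}+\overline{\bar{w}}_{i,k}+\overline{w}_{i,k}$ for $k<n$ with terminal value $\overline{u}_{i,n}=\rho_{i,n}\overline{x}_{i,n}+\overline{\bar{w}}_{i,n}+\overline{w}_{i,n}+\overline{d}_{in}$; unrolling the recursion telescopes into the closed form \eqref{Eq:SteadyStateControlLink}. Reading these steps in reverse shows that the $\overline{u}_i$ so defined, together with $\overline{\bar{x}}_i=\bar{D}_i\overline{u}_i$ and the prescribed $\overline{x}_i$, satisfies both fixed-point equations, which proves existence. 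The only delicate point is keeping the index bookkeeping of the three shift/selection matrices $\bar{A}_i,\bar{B}_i,B_i$ straight and treating the terminal link $k=n$ (where the exogenous demand replaces a controllable order) separately; there is no conceptual obstacle beyond this.
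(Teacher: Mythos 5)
Your proposal is correct and follows essentially the same route as the paper's proof: fix $u_i(t)=\overline{u}_i$, solve the transportation equilibrium to get $\overline{\bar{x}}_i=\bar{D}_i\overline{u}_i$ and steady deliveries $\overline{u}_i-\overline{\bar{w}}_i$, then solve the inventory equilibrium for $\overline{u}_i$ in terms of the prescribed $\overline{x}_i$. The only cosmetic difference is that you unroll the link-wise backward recursion explicitly, whereas the paper performs the same telescoping in matrix form via $(\I-A_i)^{-1}$ and the upper-triangular-of-ones structure of $(\I-\bar{A}_i)^{-1}$ and $(\I-B_i)^{-1}$.
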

\begin{proof}
Under the given conditions, $u_i(t) = \overline{u}_i$ in \eqref{Eq:SteadyStateControl0}. Consequently, starting from \eqref{Eq:ChainTransportationDynamics}, the resulting steady-state transportation state $\overline{\bar{x}}_i$ can be obtained as
\begin{align}
\label{Eq:SteadyStateTransportationState}
\overline{\bar{x}}_{i} =&\ \bar{A}_{i} \overline{\bar{x}}_i + \bar{B}_{i} \overline{u}_i 
= (\I-\bar{A}_i)^{-1}\bar{B}_i\overline{u}_i = \bar{D}_i\overline{u}_i.
\end{align}
The above last step is due to $(\I-\bar{A}_{i,k})^{-1}$ being an upper-triangular matrix of ones, for any $k\in\N_n$. 

On the other hand, starting from \eqref{Eq:ChainInventoryDynamics}, the resulting steady-state inventory state $\overline{x}_i$ can be obtained as
\begin{align}
\overline{x}_i =&\ A_i \overline{x}_i + (\bar{C}_i \overline{\bar{x}}_i - \overline{\bar{w}}_i) - B_i \overline{u}_i - \overline{d}_i - \overline{w}_i\nonumber\\
=&\ (\I-A_i)^{-1}\left( (\bar{C}_i \overline{\bar{x}}_i - \overline{\bar{w}}_i) - B_i \overline{u}_i - \overline{w}_i - \overline{d}_i\right)\nonumber\\
=&\ (\I-A_i)^{-1}\left( (\bar{C}_i \bar{D}_i - B_i) \overline{u}_i - \overline{\bar{w}}_i - \overline{w}_i - \overline{d}_i\right)\nonumber\\
=&\ (\I-A_i)^{-1}\left( (\I - B_i) \overline{u}_i - \overline{\bar{w}}_i - \overline{w}_i - \overline{d}_i\right).    
\label{Eq:SteadyStateInventoryState}
\end{align}
The above last two steps are are due to \eqref{Eq:SteadyStateTransportationState} and $\bar{C}_i\bar{D}_i=\I$, respectively. 
Now, since $\overline{x}_i$ is prespecified, the required steady-state control $\overline{u}_i$ can be obtained using \eqref{Eq:SteadyStateInventoryState} as 
\begin{equation}\label{Eq:SteadyStateControl}
\overline{u}_i = (\I - B_i)^{-1}\left((\I-A_i)\overline{x}_i + \overline{\bar{w}}_i + \overline{w}_i + \overline{d}_i\right).
\end{equation}
Finally, using the facts that $(\I-B_i)^{-1}$ is an upper-triangular matrix of ones and $(\I-A_i)=\diag(\rho_{i,k}:k\in\N_n)$, it is easy to see that \eqref{Eq:SteadyStateControl} is equivalent to \eqref{Eq:SteadyStateControlLink}.
\end{proof}

The result \eqref{Eq:SteadyStateControlLink} established in Prop. \ref{Pr:SteadyStateControl} intuitively means that the order magnitude of each supply chain link has to compensate for the total mean waste of the downstream supply chain and the mean customer demand.

\subsection{\textbf{Error Dynamics}}

We now provide the error dynamics of each supply chain $\tilde{\Sigma}_i, i\in\N_N$ around its equilibrium point under steady-state control \eqref{Eq:SteadyStateControl0} with \eqref{Eq:SteadyStateControl}. Let us define the error states 
$\tilde{\bar{x}}_i(t) \triangleq \bar{x}_i(t) - \overline{\bar{x}}_i$ and 
$\tilde{x}_i(t) \triangleq x_i(t) - \overline{x}_i$, and \emph{total disturbance input} 
$r_i(t)\triangleq \tilde{\bar{w}}_i + \tilde{w}_i(t) + \tilde{d}_i(t)$. Then, using  \eqref{Eq:ChainTransportationDynamics} and \eqref{Eq:ChainInventoryDynamics} (together with \eqref{Eq:SteadyStateTransportationState} and \eqref{Eq:SteadyStateInventoryState}), we get
\begin{equation}\label{Eq:SupplyChainErrorDynamics0}
\underbrace{\bm{\tilde{x}_i(t+1)\\\tilde{\bar{x}}_{i}(t+1)}}_{\mathrm{x}_i(t+1)}
= \underbrace{\bm{A_i & \bar{C}_i\\\0 & \bar{A}_{i}}}_{\mathcal{A}_i}\underbrace{\bm{\tilde{x}_i(t)\\\tilde{\bar{x}}_i(t)}}_{\mathrm{x}_i(t)} 
+ \underbrace{\bm{-B_i\\\bar{B}_i}}_{\mathcal{B}_i}\tilde{u}_i(t) + \underbrace{\bm{-\I\\\0}}_{\mathcal{D}_i} r_i(t).
\end{equation}
To stabilize (and dissipativate, if necessary) \eqref{Eq:SupplyChainErrorDynamics0}, a local state feedback controller $\mathcal{L}_i$ can be included by defining: 
\begin{equation}\label{Eq:LocalStateFeedbackController}
\tilde{u}_i(t) \triangleq \mathcal{L}_i \mathrm{x}_i(t) + \tilde{\tilde{u}}_i(t),
\end{equation}
where the virtual control input $\tilde{\tilde{u}}_i(t)$ is yet to be designed based on other global control needs like achieving supply chain inventory consensus. 
For the use of such global controllers, at each supply chain $\tilde{\Sigma}_i, i\in\N_n$, we also include the inventory error $\tilde{x}_i(t)$ as a \emph{virtual output}: 
$$\mathrm{y}_i(t) \triangleq \tilde{x}_i(t) = \mathcal{C}_i \mathrm{x}_i(t),$$
where $\mathcal{C}_i \triangleq \bm{\I & \0}$. Consequently, the error dynamics of each supply chain $\tilde{\Sigma}_i, i\in\N_N$ takes the compact form
\begin{equation}\label{Eq:SupplyChainErrorDynamics}
\begin{aligned}
\mathrm{x}_i(t+1) =&\ (\mathcal{A}_i + \mathcal{B}_i \mathcal{L}_i)\mathrm{x}_i(t) + \eta_i(t),\\
\mathrm{y}_i(t) =&\ \mathcal{C}_i \mathrm{x}_i(t),
\end{aligned}
\end{equation}
where we define the \emph{virtual input} as
\begin{equation}\label{Eq:SupplyChainErrorDynamicsInput0}
\eta_i(t) \triangleq  \mathcal{B}_i \tilde{\tilde{u}}_i(t) + \mathcal{D}_i r_i(t).
\end{equation}
The remaining task is to design the virtual control input $\tilde{\tilde{u}}_i(t)$ (also called the global controller), which we will address in the sequel. Before moving on, it is worth noting the dimensions of some key  variables: $\mathrm{x}_i(t), \eta_i(t) \in\R^{n_i}$ where $n_i \triangleq n + \tau_i$, and $r_i(t), \tilde{u}_i(t), \tilde{\tilde{u}}_i(t), \mathrm{y}_i(t) \in \R^n$.

\subsection{\textbf{Inventory Consensus Control}} 
The goal of inventory consensus is to maintain some coordination among the supply chain inventories so that 
\begin{equation}\label{Eq:CoordinationGoal0}
x_i(t) - x_j(t) = \rho_{ij}, \quad \forall i,j \in\N_N, i \neq j.
\end{equation}
Clearly, $\rho_{ij} \triangleq \overline{x}_i - \overline{x}_j$, and since $\mathrm{y}_i(t) = x_i(t) - \overline{x}_i$, \eqref{Eq:CoordinationGoal0} can be restated as
\begin{equation}\label{Eq:CoordinationGoal}
    \mathrm{y}_i(t) - \mathrm{y}_j(t) = \0, \quad \forall i,j\in\N_N,i\neq j.
\end{equation} 
To achieve this goal, at each supply chain $\tilde{\Sigma}_i,\i\in\N_N$, we propose to use a distributed inventory consensus controller:
\begin{equation}\label{Eq:ConsensusControl}
    \tilde{\tilde{u}}_i(t) \triangleq \mathcal{L}_{ii} \mathrm{y}_i(t) + \sum_{j\in \N_N\backslash\{i\}} \mathcal{L}_{ij}(\mathrm{y}_i(t)-\mathrm{y}_j(t)).
\end{equation}
Typically (e.g., see \cite{Li2020}), each $\mathcal{L}_{ij}$ term in \eqref{Eq:ConsensusControl} is constrained to be scalars $1$ or $0$ depending on whether the information from supply chain $\Sigma_j$ can be obtained at $\Sigma_i$ or not, respectively. However, we propose to treat $\mathcal{L}_{ij}$ as a free coupling weight matrix. Nevertheless, in our framework, if $\mathcal{L}_{ij}=\0$, it still implies that information from $\Sigma_j$ is not necessary at $\Sigma_i$. Therefore, it is clear that: (1) the proposed consensus control law \eqref{Eq:ConsensusControl} is distributed, (2) the overall control solution is far more flexible, and thus, can lead to less conservative behaviors, and (3) via determining $\mathcal{L}_{ij}$ terms, we can synthesize the required information flow topology.

Under \eqref{Eq:ConsensusControl}, the closed-loop error dynamics of each supply chain $\tilde{\Sigma}_i, i\in\N_N$ can be written as 
\begin{equation}\label{Eq:SupplyChainErrorDynamicsClosedLoop}
\begin{aligned}
\mathrm{x}_i(t+1) =&\ (\mathcal{A}_i + \mathcal{B}_i \mathcal{L}_i)\mathrm{x}_i(t) + \eta_i(t),\\
\mathrm{y}_i(t) =&\ \mathcal{C}_i \mathrm{x}_i(t),\\
\eta_i(t) =&\ \mathcal{B}_i \sum_{j\in\N_N} \mathcal{K}_{ij} \mathrm{y}_j(t) + \mathcal{D}_i r_i(t),
\end{aligned}
\end{equation}
with $\mathcal{K}_{ii} \triangleq \sum_{j\in\N_N}\mathcal{L}_{ij}$ and $\mathcal{K}_{ij} \triangleq -\mathcal{L}_{ij}, \forall j\in\N_N\backslash\{i\}$. It is worth noting that $\mathcal{L} \triangleq[\mathcal{L}_{ij}]_{i,j\in\N_N}$ is fully determined by $\mathcal{K}\triangleq[\mathcal{K}_{ij}]_{i,j\in\N_N}$.

Before moving on, motivated by the said coordination goal \eqref{Eq:CoordinationGoal}, we also define a \emph{performance output} $z_i(t)\in\R^n$ for each supply chain $\tilde{\Sigma}_i, i\in\N_N$ as 
\begin{equation}\label{Eq:ConsensusPerformance}
z_i(t) \triangleq \frac{1}{N}\sum_{j\in\N_N}(\mathrm{y}_i(t)-\mathrm{y}_j(t)) \equiv \sum_{j\in\N_N} \mathcal{E}_{ij}\mathrm{y}_j(t)
\end{equation}
where $\mathcal{E}_{ij} \triangleq (\mb{1}_{\{i=j\}}-\frac{1}{N})\I_n, \forall i,j\in\N_N$. This performance objective is utilized in the sequel to design the optimally robust consensus controller $\mathcal{K}$ along with the required information flow topology (i.e., the communication topology).

\subsection{\textbf{Networked System View}} 
Using the aforementioned notions of \emph{virtual inputs} $\eta(t) \triangleq [\eta_i^\T(t)]_{i\in\N_N}^\T$, \emph{virtual outputs} $\mathrm{y}(t)\triangleq [\mathrm{y}_i^\T(t)]_{i\in\N_N}^\T$, \emph{total disturbance inputs} $r(t) \triangleq [r_i^\T(t)]_{i\in\N_N}^\T$ and \emph{performance outputs} $z(t) \triangleq [z_i^\T(t)]_{i\in\N_N}^\T$, the closed-loop error dynamics of the SCN (consisting of $N$ parallel supply chains) can be viewed as a networked system as shown in Fig. \ref{Fig:SupplyChainProblem}. Note that it is of the same form as the interconnected system considered in Fig. \ref{Fig:InterconnectedSystems}. In particular, here, the corresponding interconnection relationship takes the form: 
\begin{equation}\label{Eq:SupplyChainNetworkInterconnection}
	\bm{\eta(t)\\z(t)} = M \bm{\mathrm{y}(t)\\r(t)}\equiv \bm{M_{\eta \mathrm{y}} & M_{\eta r}\\M_{z\mathrm{y}} & M_{zr}}\bm{\mathrm{y}(t)\\r(t)},
\end{equation}
where $M_{\eta \mathrm{y}} = \mathcal{B}\mathcal{K}$ with $\mathcal{B} \triangleq \diag(\mathcal{B}_i:i\in\N_N)$, 
$M_{\eta r} = \mathcal{D}$ with $\mathcal{D} \triangleq \diag(\mathcal{D}_i:i\in\N_N)$, 
$M_{z\mathrm{y}} = \mathcal{E} \triangleq [\mathcal{E}_{ij}]_{i,j\in\N_N}$ and $M_{zr}=\0$. 
It is worth noting the dimensions: $\eta(t) \in \R^{\bar{n}}$ where $\bar{n} = \sum_{i\in\N_N} n_i$, and $z(t), \mathrm{y}(t), r(t) \in \R^{nN}$. 

\begin{figure}[!t]
	\centering
	\includegraphics[width=1.5in]{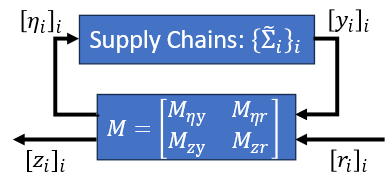}
	\caption{Closed-loop error dynamics of the SCN as a networked system.}
	\label{Fig:SupplyChainProblem}
\end{figure}

Through this networked system view of the closed-loop SCN, our goal is to synthesize the interconnection matrix block $M_{\eta \mathrm{y}}$ to obtain $\mathcal{L}_{ij}$ matrices as required in the used inventory consensus controller \eqref{Eq:ConsensusControl}. Since $\mathcal{L}_{ij}$ matrices are fully determined by the corresponding $\mathcal{K}_{ij}$ matrices, and each $\mathcal{K}_{ij}\in\R^{n\times n}$, the format of the matrix block $M_{\eta \mathrm{y}}$ is such that 
$
M_{\eta \mathrm{y}} \triangleq 
[M_{\eta \mathrm{y}}^{ij}]_{i,j\in\N_N}
$
where 
\begin{equation}\label{Eq:InterconnectionMatrixStructure}
M_{\eta \mathrm{y}}^{ij} 
= \mathcal{B}_i \mathcal{K}_{ij} \in \R^{n_i \times n},
\quad \forall i,j\in \N_N.  
\end{equation}
Consequently, we require each column of $M_{\eta \mathrm{y}}^{ij}$ to be in the column space of $\mathcal{B}_i$. This constraint, together with the facts that $\mathcal{B}_i \in \R^{n_i \times n}$ and $\text{rank }\mathcal{B}_i = n$, ensures that we can uniquely determine each $\mathcal{K}_{ij}$ using the corresponding $M_{\eta \mathrm{y}}^{ij}$.

Overall, the resulting inventory consensus controller \eqref{Eq:ConsensusControl} (through synthesizing $M_{\eta \mathrm{y}}$) should not only stabilize the closed-loop SCN but also attenuate the effects of disturbances $r(t)$ on the consensus performance $z(t)$. As the considered disturbances in our approach include both demand and intermediate uncertainties (e.g., waste), such an attenuation will minimize both the bullwhip effect and the ripple effect in the closed-loop SCN \eqref{Eq:SupplyChainErrorDynamicsClosedLoop}-\eqref{Eq:SupplyChainNetworkInterconnection}.

Consequently, when synthesizing $M_{\eta \mathrm{y}}$ (i.e., the consensus controllers), we enforce a prespecified bound $\bar{\gamma}$ on the squared $L_2$-gain $\gamma^2$ from disturbances $r(t)$ to consensus performance $z(t)$ in the closed-loop SCN. In other words, we enforce the constraint:
\begin{equation}\label{Eq:BullwhipEffect}
    \frac{\Vert z(\cdot) \Vert_{\mathcal{L}_2}^2}{\Vert r(\cdot) \Vert_{\mathcal{L}_2}^2} \leq \gamma^2 \mbox{ with } \gamma^2\leq \bar{\gamma}.
\end{equation}
According to Rm. \ref{Rm:X-DissipativityVersions}, this specification \eqref{Eq:BullwhipEffect} can be met by enforcing $\textbf{Y}\equiv \scriptsize \bm{\gamma^2 & \0\\ \0 & -\I}$-EID (from $r(t)$ to $z(t)$) along with the constraint $\gamma^2\leq \bar{\gamma}$ in the closed-loop SCN \eqref{Eq:SupplyChainErrorDynamicsClosedLoop}-\eqref{Eq:SupplyChainNetworkInterconnection}.

\section{Dissipativity-Based Control and Topology Co-Design Solution}
\label{Sec:ConsensusSolution}

This section outlines our dissipativity-based consensus control and communication topology co-design framework. As detailed in Sec. \ref{Sec:Consensus}, this co-design task can be achieved by simply designing the interconnection matrix block $M_{\eta \mathrm{y}}$ \eqref{Eq:SupplyChainNetworkInterconnection}, while ensuring it meets the necessary structural constraints \eqref{Eq:InterconnectionMatrixStructure} and the desired robustness constraints  \eqref{Eq:BullwhipEffect}. To this end, our first step is to use Co. \ref{Co:LTISystemXDisspativation} to enforce dissipativity (in particular, EID) at each supply chain error dynamics \eqref{Eq:SupplyChainErrorDynamics} via designing respective local controllers \eqref{Eq:LocalStateFeedbackController}. Subsequently, we use Prop. \ref{Pr:NSC2Synthesis} to design the interconnection matrix block $M_{\eta \mathrm{y}}$ \eqref{Eq:SupplyChainNetworkInterconnection} for the closed-loop SCN \eqref{Eq:SupplyChainErrorDynamicsClosedLoop}-\eqref{Eq:SupplyChainNetworkInterconnection}.

\subsection{Local Controller Design}

The following theorem formulates the local controller design problem as an LMI problem. The goal of each local controller is to enforce a certain EID property, which will then be exploited to design the global controller.

\begin{theorem}\label{Th:LocalControlDesign}
At each supply chain $\tilde{\Sigma}_i, i\in\N_n$, the error dynamics \eqref{Eq:SupplyChainErrorDynamics} are $X_i$-EID (from $\eta_i(t)$ to $\mathrm{y}_i(t)$) when the local controller $\mathcal{L}_i$ \eqref{Eq:LocalStateFeedbackController} is designed via the LMI problem:
\begin{equation}\label{Eq:Th:LocalControlDesign}
\begin{aligned}
&\mbox{Find:}&&\      \mathcal{K}_i, P_i, X_i\\
&\mbox{Sub. to:}&&\   P_i > 0, \mbox{ and }
\end{aligned}    
\end{equation}
\begin{equation}\label{Eq:Th:LocalControlDesign0}
\scriptsize
\bm{-(X_i^{22})^{-1} & \0 & \mathcal{C}_i P_i & \0 \\
\0 & P_i & \mathcal{A}_iP_i+\mathcal{B}_i\mathcal{K}_i & \I \\
P_i\mathcal{C}_i^\T & P_i\mathcal{A}_i^\T + \mathcal{K}_i^\T \mathcal{B}_i^\T & P_i  & P_i\mathcal{C}_i^\top X_i^{21}\\
\0 & \I & X_i^{12} \mathcal{C}_i P_i & X_i^{11}} 
\normalsize  \geq 0,
\end{equation}
where $\mathcal{L}_i = \mathcal{K}_iP_i^{-1}$.
\end{theorem}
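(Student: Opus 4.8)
The plan is to recognize the local error dynamics \eqref{Eq:SupplyChainErrorDynamics} as a direct instance of the closed-loop LTI system \eqref{Eq:Co:LTISystemXDisspativation1} treated in Corollary \ref{Co:LTISystemXDisspativation}, and then invoke that corollary verbatim. Concretely, I would set up the correspondence $A \leftrightarrow \mathcal{A}_i$, $B \leftrightarrow \mathcal{B}_i$, $C \leftrightarrow \mathcal{C}_i$, $L \leftrightarrow \mathcal{L}_i$, with the external input $\eta \leftrightarrow \eta_i(t)$ and output $y \leftrightarrow \mathrm{y}_i(t)$. The structural requirement $C = [\I\ \0]$ demanded by the corollary holds because the excerpt defines $\mathcal{C}_i \triangleq [\I\ \0]$, and the role of $\eta_i(t)$ as the single exogenous input — already having absorbed the still-undesigned global term $\tilde{\tilde{u}}_i(t)$ via \eqref{Eq:SupplyChainErrorDynamicsInput0} — means no additional input/output channels need to be tracked.

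Next I would verify that the hypothesis $X^{22}<0$ of Corollary \ref{Co:LTISystemXDisspativation} is automatically in force at any feasible point of \eqref{Eq:Th:LocalControlDesign}--\eqref{Eq:Th:LocalControlDesign0}: the $(1,1)$ block of \eqref{Eq:Th:LocalControlDesign0} is $-(X_i^{22})^{-1}$, so feasibility forces this block to be positive semidefinite, and since it is written as an inverse it is nonsingular, hence $-(X_i^{22})^{-1}>0$, i.e. $X_i^{22}<0$. The only formal difference from the corollary is that here $X_i$ is listed among the decision variables rather than prescribed in advance, but this is immaterial: for any feasible triple $(\mathcal{K}_i,P_i,X_i)$ one applies Corollary \ref{Co:LTISystemXDisspativation} with that particular $X_i$ fixed.

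Finally, under this correspondence the LMI \eqref{Eq:Th:LocalControlDesign0} is term-by-term identical to the LMI of Corollary \ref{Co:LTISystemXDisspativation} (with $K \leftrightarrow \mathcal{K}_i$, $P \leftrightarrow P_i$, $X \leftrightarrow X_i$), and $P_i>0$ matches $P>0$; therefore the ``if'' direction of the corollary gives that \eqref{Eq:SupplyChainErrorDynamics} is $X_i$-EID from $\eta_i(t)$ to $\mathrm{y}_i(t)$ with $\mathcal{L}_i = \mathcal{K}_i P_i^{-1}$, which is exactly the assertion. I anticipate no genuine obstacle beyond this bookkeeping: the substantive content — the storage-function inequality of Definition \ref{Def:X-EID}, and the congruence transformation and Schur-complement steps that convert the quadratic dissipation inequality into the displayed LMI — was already carried out in Proposition \ref{Pr:LTISystemXDisspativity} and Corollary \ref{Co:LTISystemXDisspativation}, so this proof is essentially a specialization argument.
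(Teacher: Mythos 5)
Your proposal is correct and follows essentially the same route as the paper, whose proof is exactly the specialization of Corollary \ref{Co:LTISystemXDisspativation} to the error dynamics \eqref{Eq:SupplyChainErrorDynamics} with $A\leftrightarrow\mathcal{A}_i$, $B\leftrightarrow\mathcal{B}_i$, $C\leftrightarrow\mathcal{C}_i$ and $\mathcal{L}_i=\mathcal{K}_iP_i^{-1}$. Your added bookkeeping (noting $(-X^{22})^{-1}=-(X_i^{22})^{-1}$, that feasibility of the $(1,1)$ block forces $X_i^{22}<0$, and that $X_i$ being a decision variable is handled by fixing it at any feasible point) is consistent with, and slightly more explicit than, the paper's one-line argument.
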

\begin{proof}
The proof follows from considering the supply chain error dynamics \eqref{Eq:SupplyChainErrorDynamics} and applying the LMI-based controller design technique established in Co. \ref{Co:LTISystemXDisspativation}. 
\end{proof}

We next make several remarks regarding the LMI problem \eqref{Eq:Th:LocalControlDesign} in Th. \ref{Th:LocalControlDesign}, particularly pertaining to its implementation.

\begin{remark}
Recall that $n$ is the number of inventories in a supply chain and $n_i = n + \tau_i$ where $\tau_i$ is the total delay of transportation links in supply chain $\tilde{\Sigma}_i, i\in\N_N$. The dimensions of different variables/parameters appearing in the LMI problem \eqref{Eq:Th:LocalControlDesign} in Th. \ref{Th:LocalControlDesign} are as follows:  
$\mathcal{K}_i \in \R^{n \times n_i}$, $P_i \in \R^{n_i\times n_i}$,  
$\mathcal{C}_i \in \R^{n \times n_i}$,
$\mathcal{B}_i \in \R^{n_i \times n}$, 
$\mathcal{L}_i \in \R^{n \times n_i}$, and 
$X_i \triangleq [X_i^{kl}]_{k,l\in\N_2} \in \R^{(n_i+n) \times (n_i+n)}$ where 
$X_i^{11} \in \R^{n_i \times n_i}$, 
$X_i^{12} \in \R^{n_i\times n}$, 
$X_i^{21} \in \R^{n \times n_i}$, and 
$X_i^{22} \in \R^{n \times n}$.  
\end{remark}

\begin{remark}
    \label{Rm:LocalDissipativity}
When implementing Th. \ref{Th:LocalControlDesign}, we constrain the desired $X_i$-EID property to resemble the IF-OFP property mentioned in Rm. \ref{Rm:X-DissipativityVersions}. Therefore, $X_i = \scriptsize \bm{-\nu_i\I & 0.5\I \\ 0.5\I & -\rho_i\I}$, and the term $-(X_i^{22})^{-1}$ in \eqref{Eq:Th:LocalControlDesign0} becomes $-(X_i^{22})^{-1} = \frac{1}{\rho_i}\I = \tilde{\rho}_i \I$ where we define a linearizing change of variables: $\tilde{\rho}_i \triangleq \frac{1}{\rho_i}$ to preserve the LMI form of \eqref{Eq:Th:LocalControlDesign}. Consequently, instead of searching for a matrix $X_i$, we only have to search for scalar passivity indices $\nu_i$ and $\tilde{\rho}_i$ in the LMI problem \eqref{Eq:Th:LocalControlDesign}.
\end{remark}

\begin{remark}\label{Rm:LocalDissipativity2}
To enforce an IF-OFP property at each supply chain $\tilde{\Sigma}_i, i\in\N_N$ as mentioned in Rm. \ref{Rm:LocalDissipativity} requires us to extend its number of outputs from $n$ to $n_i$ so that the number of inputs and outputs are equal. This is simply achieved by replacing $\mathcal{C}_i \in \R^{n \times n_i}$ with $\I \in\R^{n_i \times n_i}$ and doing necessary dimension changes to $X_i^{22}, X_i^{12}$ and $X_i^{21}$ blocks in \eqref{Eq:Th:LocalControlDesign}. The intuition behind such a modification is to use the full error state (as opposed to a partial error state $\mathrm{y}_i(t) = \mathcal{C}_i \mathrm{x}_i(t)$) as the output. Including such additional outputs in the local controller design ensures a stricter dissipativity property at each supply chain and leads to better inventory tracking control results. It is also worth noting that this inclusion does not impact the networked system view of the SCN presented before or the global control design process discussed in the sequel. Overall, this modification motivates the systematic handling of inventory tracking and consensus control goals through local and global controllers, respectively. 
\end{remark}


\subsection{Global Control and Topology Co-Design}

The following theorem formulates the global control and topology co-design problem as an LMI problem. Note that, in this co-design, we optimize a joint control and topology objective. The control objective is selected as the parameter $\tilde{\gamma} \triangleq \gamma^2$ in the enforced robustness constraint \eqref{Eq:BullwhipEffect} to minimize the bullwhip and ripple effects in the SCN. 
The topology objective is selected analogous to 
$
\sum_{i,j\in\N_N, k,l\in\N_n} C_{ij}^{kl} \vert \mathcal{K}_{ij}^{kl} \vert
$ 
where 
$C \triangleq [[C_{ij}^{kl}]_{k,l\in\N_n}]_{i,j\in\N_N} \in \R^{nN\times nN}_{\geq 0}$ is a prespecified cost coefficient matrix containing communication cost values between different inventory pairs in the SCN. We use 1-norms (i.e., absolute values) of the design variables in the topology objective to induce sparsity \cite{Sun2020} in the resulting topology. Note also that we use a prespecified scaling cost coefficient $c_0 \in \R_{\geq 0}$ to scale up/down the control objective with respect to the topology objective.

\begin{theorem}\label{Th:GlobalControlDesign}
Under the local controllers designed in Th. \ref{Th:LocalControlDesign}, the closed-loop SCN \eqref{Eq:SupplyChainErrorDynamicsClosedLoop}-\eqref{Eq:SupplyChainNetworkInterconnection} (shown in Fig. \ref{Fig:SupplyChainProblem}) satisfies the robustness constraint \eqref{Eq:BullwhipEffect} when the distributed consensus controller $\mathcal{K}$ \eqref{Eq:SupplyChainErrorDynamicsClosedLoop} is designed via the LMI problem:
\begin{equation}\label{Eq:Th:GlobalControlDesign}
\begin{aligned}
\min_{\bar{\mathcal{K}}, \tilde{\gamma}, \{p_i:i\in\N_N\}}&\ 
\sum_{i,j\in\N_N, k,l\in\N_n} C_{ij}^{kl} \vert \bar{\mathcal{K}}_{ij}^{kl} \vert
+ c_0 \tilde{\gamma}\\
\mbox{Sub. to:}&\ 
p_i > 0, \forall i\in\N_N,\ \ 
0 \leq \tilde{\gamma} \leq \bar{\gamma}, \ \ \mbox{and} 
\end{aligned}
\end{equation}
\begin{equation}
\label{Eq:Th:GlobalControlDesign0}
\scriptsize 
\bm{
\textbf{X}_p^{11} & \0 & L_{\eta \mathrm{y}} & \textbf{X}_p^{11}\mathcal{D} \\
\0 & \I & \mathcal{E} & \0\\ 
L_{\eta \mathrm{y}}^\T & \mathcal{E}^\T & -L_{\eta \mathrm{y}}^\T \textbf{X}^{12}-\textbf{X}^{21}L_{\eta \mathrm{y}}-\textbf{X}_p^{22} & -\textbf{X}^{21}\textbf{X}_p^{11}\mathcal{D}\\
\mathcal{D}^\T \textbf{X}_p^{11} & \0 & -\mathcal{D}^\T \textbf{X}_p^{11} \textbf{X}^{12}&  \tilde{\gamma}\I
} \normalsize > 0,
\end{equation}
where $\bar{\mathcal{K}} \triangleq [\bar{\mathcal{K}}_{ij}]_{i,j\in\N_N}$ with each $\bar{\mathcal{K}}_{ij} \triangleq  [\bar{\mathcal{K}}_{ij}^{kl}]_{k,l\in\N_n}$,  
$\textbf{X}_p^{11} \triangleq \diag([p_iX_i^{11}]_{i\in\N_N})$,  
$\textbf{X}_p^{22} \triangleq \diag([p_iX_i^{22}]_{i\in\N_N})$,
$\textbf{X}^{12} \triangleq \diag([(X_i^{11})^{-1}X_i^{12}]_{i\in\N_N})$,
$\textbf{X}^{21} \triangleq (\textbf{X}^{12})^\T$, 
$L_{\eta \mathrm{y}} \triangleq [L_{\eta \mathrm{y}}^{ij}]_{i,j\in\N_N}$ with each 
$L_{\eta \mathrm{y}}^{ij} \triangleq X_i^{11} \mathcal{B}_i\bar{\mathcal{K}}_{ij}$, and 
\begin{equation}\label{Eq:Th:GlobalControlDesign1}
\mathcal{K} \triangleq [\mathcal{K}_{ij}]_{i,j\in\N_N} 
\mbox{ with each } \mathcal{K}_{ij} \triangleq \frac{1}{p_i}\bar{\mathcal{K}}_{ij}.    
\end{equation}
\end{theorem}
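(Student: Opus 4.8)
The plan is to obtain Th.~\ref{Th:GlobalControlDesign} as a direct specialization of the interconnection-synthesis result Prop.~\ref{Pr:NSC2Synthesis}, applied to the networked-system representation of the closed-loop SCN in Fig.~\ref{Fig:SupplyChainProblem}. First I would identify the ingredients of that representation: the subsystems are the locally controlled error dynamics \eqref{Eq:SupplyChainErrorDynamics}, which by Th.~\ref{Th:LocalControlDesign} are $X_i$-EID; the exogenous input is $w=r$ and the interested output is $z$; and from \eqref{Eq:SupplyChainNetworkInterconnection} the interconnection blocks are $M_{uy}=M_{\eta\mathrm{y}}=\mathcal{B}\mathcal{K}$, $M_{uw}=M_{\eta r}=\mathcal{D}$, $M_{zy}=M_{z\mathrm{y}}=\mathcal{E}$ and $M_{zw}=M_{zr}=\0$, where the last three are fixed (since $z$ is the prescribed combination \eqref{Eq:ConsensusPerformance}) and only $M_{\eta\mathrm{y}}$, hence $\mathcal{K}$, is to be designed. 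By Rm.~\ref{Rm:X-DissipativityVersions} and the discussion around \eqref{Eq:BullwhipEffect}, enforcing the robustness constraint \eqref{Eq:BullwhipEffect} is equivalent to rendering $\Sigma$ $\textbf{Y}$-EID from $r$ to $z$ with $\textbf{Y}=\bm{\tilde{\gamma}\I & \0\\ \0 & -\I}$, $\tilde\gamma=\gamma^2$, together with $\tilde\gamma\le\bar\gamma$.

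Next I would verify the two hypotheses of Prop.~\ref{Pr:NSC2Synthesis} for this choice. Assumption~\ref{As:NegativeDissipativity} is immediate, as $\textbf{Y}^{22}=-\I<0$. For Assumption~\ref{As:PositiveDissipativity}, recall from Rm.~\ref{Rm:LocalDissipativity} that each local design in Th.~\ref{Th:LocalControlDesign} is constrained so that $\Sigma_i$ is IF-OFP$(\nu_i,\rho_i)$, i.e. $X_i^{11}=-\nu_i\I$, which is sign-definite; I would take the case $\nu_i<0$ (so $X_i^{11}>0$) and invoke branch~\eqref{Eq:Pr:NSC2Synthesis2} of Prop.~\ref{Pr:NSC2Synthesis}, the complementary case $\nu_i>0$ being handled identically via \eqref{Eq:Pr:NSC2Synthesis2Alternative}. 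Prop.~\ref{Pr:NSC2Synthesis} then guarantees that $\Sigma$ is $\textbf{Y}$-EID provided \eqref{Eq:Pr:NSC2Synthesis2} holds for the variables $L_{uy},L_{uw},M_{zy},M_{zw},\{p_i\}$, with $M_{uy}=(\textbf{X}_p^{11})^{-1}L_{uy}$ and $M_{uw}=(\textbf{X}_p^{11})^{-1}L_{uw}$.

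The substitution step then does most of the work. Plugging $\textbf{Y}^{11}=\tilde\gamma\I$, $\textbf{Y}^{22}=-\I$, $\textbf{Y}^{12}=\textbf{Y}^{21}=\0$, $M_{zy}=\mathcal{E}$ and $M_{zw}=\0$ into \eqref{Eq:Pr:NSC2Synthesis2} collapses its second, third and fourth block rows and columns to exactly those of \eqref{Eq:Th:GlobalControlDesign0}. To bring in the structural requirement \eqref{Eq:InterconnectionMatrixStructure}, I would exploit that $\textbf{X}_p^{11}=\diag([p_iX_i^{11}]_i)$ is block diagonal: then $M_{\eta r}=(\textbf{X}_p^{11})^{-1}L_{\eta r}=\mathcal{D}$ forces $L_{\eta r}=\textbf{X}_p^{11}\mathcal{D}$ (yielding the $(1,4)$ and $(3,4)$ blocks), and $M_{\eta\mathrm{y}}$ decomposes blockwise as $M_{\eta\mathrm{y}}^{ij}=(p_iX_i^{11})^{-1}L_{\eta\mathrm{y}}^{ij}$. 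Introducing the linearizing change of variables $\bar{\mathcal{K}}_{ij}\triangleq p_i\mathcal{K}_{ij}$ and restricting $L_{\eta\mathrm{y}}^{ij}\triangleq X_i^{11}\mathcal{B}_i\bar{\mathcal{K}}_{ij}$ gives $M_{\eta\mathrm{y}}^{ij}=(p_iX_i^{11})^{-1}X_i^{11}\mathcal{B}_i\bar{\mathcal{K}}_{ij}=\mathcal{B}_i(\tfrac{1}{p_i}\bar{\mathcal{K}}_{ij})=\mathcal{B}_i\mathcal{K}_{ij}$, which is precisely \eqref{Eq:InterconnectionMatrixStructure}; moreover, since $X_i^{11}=-\nu_i\I$ is a scalar multiple of the identity, every column of this block automatically lies in $\mathrm{col}(\mathcal{B}_i)$, so no extra constraint is needed for $\mathcal{K}_{ij}$ to be uniquely recoverable (as discussed after \eqref{Eq:InterconnectionMatrixStructure}). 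Substituting this parametrization into \eqref{Eq:Pr:NSC2Synthesis2} produces \eqref{Eq:Th:GlobalControlDesign0}, so its feasibility together with $0\le\tilde\gamma\le\bar\gamma$ certifies $\textbf{Y}$-EID and hence \eqref{Eq:BullwhipEffect}. Finally, I would note that the cost in \eqref{Eq:Th:GlobalControlDesign} is the weighted $\ell_1$ sparsity surrogate $\sum C_{ij}^{kl}|\bar{\mathcal{K}}_{ij}^{kl}|$ (which differs from the nominal $\sum C_{ij}^{kl}|\mathcal{K}_{ij}^{kl}|$ only by the positive scalings $p_i$) plus the scaled robustness term $c_0\tilde\gamma$, and that each $|\bar{\mathcal{K}}_{ij}^{kl}|$ admits the standard epigraph reformulation via a scalar slack and two linear inequalities, so the overall problem stays an LMI problem; recovering $\mathcal{K}$ by \eqref{Eq:Th:GlobalControlDesign1} and then $\mathcal{L}$ from $\mathcal{K}=[\mathcal{K}_{ij}]$ as in \eqref{Eq:SupplyChainErrorDynamicsClosedLoop} completes the design.

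The only genuinely delicate point is simultaneously honoring the $(\textbf{X}_p^{11})^{-1}$ left-scaling of $L_{uy}$ inherited from Prop.~\ref{Pr:NSC2Synthesis} and the block-structured constraint \eqref{Eq:InterconnectionMatrixStructure}; it is exactly this pair of requirements that forces the specific change of variables $L_{\eta\mathrm{y}}^{ij}=X_i^{11}\mathcal{B}_i\bar{\mathcal{K}}_{ij}$ with $\bar{\mathcal{K}}_{ij}=p_i\mathcal{K}_{ij}$, and any other choice would reintroduce the bilinear product $p_i\mathcal{K}_{ij}$ and break the LMI form. The remaining work is purely careful block-by-block bookkeeping to confirm that \eqref{Eq:Pr:NSC2Synthesis2} reduces term-for-term to the displayed \eqref{Eq:Th:GlobalControlDesign0}; everything else is a direct appeal to Prop.~\ref{Pr:NSC2Synthesis} and Rm.~\ref{Rm:X-DissipativityVersions}.
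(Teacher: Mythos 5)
Your proposal is correct and follows essentially the same route as the paper's proof: specialize Prop.~\ref{Pr:NSC2Synthesis} to the closed-loop SCN with $\textbf{Y}=\mathrm{diag}(\tilde{\gamma}\I,-\I)$ for the $L_2$-gain constraint \eqref{Eq:BullwhipEffect}, substitute $M_{\eta r}=\mathcal{D}$, $M_{z\mathrm{y}}=\mathcal{E}$, $M_{zr}=\0$ to collapse \eqref{Eq:Pr:NSC2Synthesis2} into \eqref{Eq:Th:GlobalControlDesign0}, and linearize via $L_{\eta\mathrm{y}}^{ij}=X_i^{11}\mathcal{B}_i\bar{\mathcal{K}}_{ij}$ with $\bar{\mathcal{K}}_{ij}=p_i\mathcal{K}_{ij}$. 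Your added checks (Assumptions \ref{As:NegativeDissipativity}--\ref{As:PositiveDissipativity}, the column-space condition \eqref{Eq:InterconnectionMatrixStructure} being automatic for $X_i^{11}=-\nu_i\I$, and the epigraph handling of the $\ell_1$ objective) are consistent details the paper leaves implicit.
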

\begin{proof}
The proof starts with the interconnection topology synthesis result in Prop. \ref{Pr:NSC2Synthesis} and applying: (i) the subsystem dissipativity properties ensured in Th. \ref{Th:LocalControlDesign}, (ii) the desired closed-loop networked system dissipativity properties \eqref{Eq:BullwhipEffect}, and (iii) the SCN's interconnection matrices  \eqref{Eq:SupplyChainNetworkInterconnection}.  
The resulting main LMI condition takes the form:
\begin{equation}\label{Eq:Th:GlobalControlDesignStep1}
\scriptsize 
\bm{
\textbf{X}_p^{11} & \0 & L_{\eta y} & L_{\eta r} \\
\0 & \I & \E & \0 \\ 
L_{\eta y}^\T & \E^\T & -L_{\eta y}^\T \textbf{X}^{12}-\textbf{X}^{21}L_{\eta y}-\textbf{X}_p^{22} & -\textbf{X}^{21}L_{\eta r} \\
L_{\eta r}^\T & \0 & -L_{\eta r}^\T \textbf{X}^{12} &  \tilde{\gamma}\I
} \normalsize >0.
\end{equation}

Next, using the definition of $M_{uw}$ term given in Prop. \ref{Pr:NSC2Synthesis} and the relationship $M_{\eta r} = \mathcal{D}$ identified in \eqref{Eq:SupplyChainNetworkInterconnection}, we can simplify the $L_{\eta r}$ terms appearing in \eqref{Eq:Th:GlobalControlDesignStep1} as 
$L_{\eta r} = \textbf{X}_p^{11}M_{\eta r} = \textbf{X}_p^{11}\mathcal{D}$. 
Consequently, the LMI condition \eqref{Eq:Th:GlobalControlDesignStep1} can be simplified to obtain the LMI condition \eqref{Eq:Th:GlobalControlDesign0}.

Note that, under the definition of $M_{uy}$ given in Prop. \ref{Pr:NSC2Synthesis} and the relationship $M_{\eta y} = \mathcal{B}\mathcal{K}$ 
identified in \eqref{Eq:SupplyChainNetworkInterconnection}, we can interpret the $L_{\eta y}$ terms appearing in \eqref{Eq:Th:GlobalControlDesign0} as 
$L_{\eta y} = \textbf{X}_p^{11}M_{\eta y} = \textbf{X}_p^{11}\mathcal{B}\mathcal{K}$. This, under the proposed change of variables \eqref{Eq:Th:GlobalControlDesign1}, implies that
\begin{equation}\label{Eq:Th:GlobalControlDesignStep2}
    L_{\eta y}^{ij} = p_i X_i^{11}\mathcal{B}_i \mathcal{K}_{ij} =  X_i^{11}\mathcal{B}_i \bar{\mathcal{K}}_{ij}, \quad \forall i,j\in\N_N.
\end{equation}
Hence it is clear that all matrix blocks in \eqref{Eq:Th:GlobalControlDesign0} can be written as linear expressions of the decision variables $\bar{\mathcal{K}}, \tilde{\gamma}, \{p_i:i\in\N_N\}$ considered in the formulated co-design problem \eqref{Eq:Th:GlobalControlDesign}, thus preserving its LMI (convex optimization) problem form. 

Finally, we point out that the proposed change of variables \eqref{Eq:Th:GlobalControlDesign1} enables: (i) direct penalization of the used communication links, and (ii) direct evaluation of the required consensus controller gains $\mathcal{K} = [\mathcal{K}_{ij}]_{i,j\in\N_N}$ \eqref{Eq:SupplyChainErrorDynamicsClosedLoop}. 
As the motivation behind the proposed joint control and topology objective function used in \eqref{Eq:Th:GlobalControlDesign0} has been made clear in the preceding discussion to Th. \ref{Th:GlobalControlDesign}, this completes the proof. 
\end{proof}

\begin{figure*}[!b]
\hrulefill
\begin{align}\label{Eq:LocalControllerDesignConditionStep2}
\centering
\scriptsize
\bm{
p_iX_i^{11} & \0 & L_{\eta \mathrm{y}}^{ii} & p_iX_i^{11}\mathcal{D}_i \\
\0 & \I & \mathcal{E}_{ii} & \0\\ 
(L_{\eta \mathrm{y}}^{ii})^\T & \mathcal{E}_{ii}^\T &  -(L_{\eta \mathrm{y}}^{ii})^\T (X_i^{11})^{-1}X_i^{12}-X_i^{21}(X_i^{11})^{-1}L_{\eta \mathrm{y}}^{ii}-p_iX_i^{22} & -p_iX_i^{21}\mathcal{D}_i\\
p_i\mathcal{D}_{i}^\T X_i^{11} & \0 & -p_i\mathcal{D}_i^\T X_i^{12} &  \tilde{\gamma}_i\I
} > 0, \quad \forall i\in\N_N.
\end{align}
\end{figure*}

\subsection{Improved Local Controller Design}

The feasibility and the effectiveness of the formulated global co-design problem \eqref{Eq:Th:GlobalControlDesign} can be affected by the individual supply chain dissipativity properties (i.e., $X_i, i\in\N_N$) achieved by the local control design problems \eqref{Eq:Th:LocalControlDesign}. To force the resulting dissipativity properties from local control designs to pave the way to a feasible and effective global co-design, a special additional condition can be identified and included in the local control design. This idea is explained in the following two remarks, and the developed improved local control design is summarized in the proceeding Th. \ref{Th:ImprovedLocalControlDesign}.

\begin{remark}\label{Rm:ImprovedLocalDesign}
To identify a set of local necessary conditions for the feasibility of the global co-design problem \eqref{Eq:Th:GlobalControlDesign}, first, let us denote its main LMI constraint \eqref{Eq:Th:GlobalControlDesign0} as $\Psi > 0$. Based on the block structure of $\Psi$, note that $\Psi = [\Psi_{kl}]_{k,l\in\N_4}$ where each block $\Psi_{kl}=[\Psi_{kl}^{ij}]_{i,j\in \N_N}$. Using \cite[Lm. 1]{WelikalaP32022}, it can be shown that 
$$\Psi > 0 \iff \mbox{BEW}(\Psi) \triangleq [[\Psi_{kl}^{ij}]_{k,l\in\N_4}]_{i,j\in\N_N} > 0.$$ 
Now, considering the diagonal blocks of $\mbox{BEW}(\Psi)$, note that    
\begin{align*}\nonumber
    \mbox{BEW}(\Psi)>0 &\implies\ [\Psi_{kl}^{ii}]_{k,l\in\N_4} > 0, \ \forall i\in\N_N \\ \nonumber
    &\implies\ [\bar{\Psi}_{kl}^{ii}]_{k,l\in\N_4} > 0, \ \forall i\in\N_N 
    \ \equiv\ \eqref{Eq:LocalControllerDesignConditionStep2},
\end{align*}
where $\bar{\Psi}_{kl}^{ii}$ (as can be seen in  \eqref{Eq:LocalControllerDesignConditionStep2}) is identical to $\Psi_{kl}^{ii}$ except for a $\tilde{\gamma}_i$ term such that $\tilde{\gamma}_i > \tilde{\gamma}$ replacing the original $\tilde{\gamma}$ term in $\Psi_{kl}^{ii}, i\in\N_N$.
Consequently, \eqref{Eq:LocalControllerDesignConditionStep2} is a set of local necessary conditions for \eqref{Eq:Th:GlobalControlDesign0}, which now can be included in respective local control design problems \eqref{Eq:Th:LocalControlDesign}. It is worth noting that, in such improved local control design problems, the introduced variable $\tilde{\gamma}_i$ in \eqref{Eq:LocalControllerDesignConditionStep2} needs to be treated as an LMI variable to be determined, and the global design variables $p_i$ and $L_{\eta\mathrm{y}}^{ii}$ in \eqref{Eq:LocalControllerDesignConditionStep2} needs to be treated as prespecified design parameters. 
\end{remark}

At each supply chain $\tilde{\Sigma}_i, i\in\N_N$, if the specific $X_i$-EID property where $X_i = \scriptsize \bm{-\nu_i\I & 0.5\I \\ 0.5\I & -\rho_i\I}$ is adopted (as considered in Rm. \ref{Rm:LocalDissipativity}), as detailed in the following remark, a specialized set of local necessary conditions can be derived from \eqref{Eq:LocalControllerDesignConditionStep2}.

\begin{remark} \label{Rm:ImprovedLocalDesign2}
By applying the specific $X_i$-EID property where $X_i = \scriptsize \bm{-\nu_i\I & 0.5\I \\ 0.5\I & -\rho_i\I}$, we can further simplify the obtained local necessary conditions \eqref{Eq:LocalControllerDesignConditionStep2} into the form: 
\begin{equation}
\begin{aligned}
\scriptsize
\bm{
-p_i\nu_i\I & \0 & L_{\eta \mathrm{y}}^{ii} & -p_i\nu_i\mathcal{D}_i \\
\0 & \I_n & \mathcal{E}_{ii} & \0\\ 
(L_{\eta \mathrm{y}}^{ii})^\T & \mathcal{E}_{ii}^\T &  
\frac{1}{2\nu_i}(\bar{L}_{\eta \mathrm{y}}^{ii})^\T + \frac{1}{2\nu_i}\bar{L}_{\eta \mathrm{y}}^{ii} - p_i\rho_i\I_n & \frac{1}{2}p_i\I_n\\
-p_i\nu_i\mathcal{D}_{i}^\T & \0 & \frac{1}{2}p_i\I_n&  \tilde{\gamma}_i\I_n
} \normalsize 
> 0, \\
\forall i\in\N_N,
\end{aligned}
\label{Eq:LocalControllerDesignConditionStep3}
\end{equation}
where $\bar{L}_{\eta \mathrm{y}}^{ii} \triangleq \bm{\I_n & \0}L_{\eta \mathrm{y}}^{ii} \in \R^{n \times n}$. Next, using the facts that: 
(i) $\mathcal{D}_i = \scriptsize \bm{-\I_{n} \\ \0 }$, 
(ii) $\mathcal{E}_{ii} = (1-\frac{1}{N})\I_n$,
and  
(iii) the $n$\tsup{th} rows of $L_{\eta \mathrm{y}}^{ii}$ and $\bar{L}_{\eta \mathrm{y}}^{ii}$ are only of zeros (see \eqref{Eq:Th:GlobalControlDesignStep2}), a corresponding set of local necessary conditions for the above conditions \eqref{Eq:LocalControllerDesignConditionStep3}  can be identified as 
\begin{equation}\label{Eq:LocalControllerDesignConditionStep4}
\scriptsize
\bm{
-p_i\nu_i & 0 & 0 & p_i\nu_i \\
0 & 1 & (1-\frac{1}{N}) & 0\\ 
0 & (1-\frac{1}{N}) &  -p_i\rho_i & \frac{1}{2}p_i\\
p_i\nu_i & 0 & \frac{1}{2}p_i &  \tilde{\gamma}_i
}
\normalsize
> 0, \quad \forall \in \N_N.
\end{equation}
Finally, these specialized local necessary conditions \eqref{Eq:LocalControllerDesignConditionStep4} can be included in respective local control design problems \eqref{Eq:Th:LocalControlDesign}. Note that, in these improved local control design problems, unlike those in Rm, \ref{Rm:ImprovedLocalDesign}, only the global design variables $p_i$ in  \eqref{Eq:LocalControllerDesignConditionStep4} need to be treated as prespecified design parameters.
\end{remark}

We conclude this section by providing the improved local control design problem for the specific case considered in Remarks \ref{Rm:LocalDissipativity}, \ref{Rm:LocalDissipativity2} and \ref{Rm:ImprovedLocalDesign2}, as a theorem. The proof is omitted here, as it is evident from the said remarks.

\begin{theorem}\label{Th:ImprovedLocalControlDesign}
At each supply chain $\tilde{\Sigma}_i, i\in\N_n$, the error dynamics \eqref{Eq:SupplyChainErrorDynamics} are $X_i$-EID (from $\eta_i(t)$ to $\mathrm{x}_i(t)$) with $X_i = \scriptsize \bm{-\nu_i\I & 0.5\I \\ 0.5\I & -\rho_i\I}$ when the local controller $\mathcal{L}_i$ \eqref{Eq:LocalStateFeedbackController} is designed via the LMI problem:
\begin{equation}\label{Eq:Th:ImprovedLocalControlDesign}
\begin{aligned}
\mbox{Find:}\ &\mathcal{K}_i, P_i, \nu_i, \tilde{\rho}_i, \tilde{\gamma}_i\\
\mbox{Sub. to:}\ &P_i > 0, \mbox{ and }\\
&\scriptsize
\bm{-\tilde{\rho}_i\I & \0 & P_i & \0 \\
\0 & P_i & \mathcal{A}_iP_i+\mathcal{B}_i\mathcal{K}_i & \I \\
P_i & P_i\mathcal{A}_i^\T + \mathcal{K}_i^\T \mathcal{B}_i^\T & P_i  & 0.5P_i\\
\0 & \I & 0.5 P_i & \nu_i \I} 
\normalsize  \geq 0,\\
&\scriptsize
\bm{
-p_i\nu_i & 0 & 0 & p_i\nu_i \\
0 & 1 & (1-\frac{1}{N}) & 0\\ 
0 & (1-\frac{1}{N}) &  -p_i\rho_i & \frac{1}{2}p_i\\
p_i\nu_i & 0 & \frac{1}{2}p_i &  \tilde{\gamma}_i
}
\normalsize
\geq 0,
\end{aligned}    
\end{equation}
where $\mathcal{L}_i = \mathcal{K}_iP_i^{-1}$, $\rho_i = \frac{1}{\tilde{\rho}_i}$, and $p_i$ is some prespecified design parameter such that $p_i > 0$.
\end{theorem}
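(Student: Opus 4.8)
The statement is essentially a corollary of Theorem~\ref{Th:LocalControlDesign} once the coefficient matrix is specialized and the output is augmented, reinforced by the local necessary conditions isolated in Remarks~\ref{Rm:ImprovedLocalDesign}--\ref{Rm:ImprovedLocalDesign2}. So the plan is to (a) recover the first LMI of \eqref{Eq:Th:ImprovedLocalControlDesign} as a reparametrization of \eqref{Eq:Th:LocalControlDesign0}; (b) recover the second, scalar LMI of \eqref{Eq:Th:ImprovedLocalControlDesign} as the set of local necessary conditions distilled from the global LMI \eqref{Eq:Th:GlobalControlDesign0}; and (c) observe that, since the first LMI alone is exactly the hypothesis of Theorem~\ref{Th:LocalControlDesign} (with the augmented output), and the scalar LMI only restricts the feasible set further, the claimed $X_i$-EID of \eqref{Eq:SupplyChainErrorDynamics} from $\eta_i$ to $\mathrm{x}_i$ follows immediately.

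For (a): apply Corollary~\ref{Co:LTISystemXDisspativation}, equivalently Theorem~\ref{Th:LocalControlDesign}, to the error dynamics \eqref{Eq:SupplyChainErrorDynamics} with $A=\mathcal{A}_i$, $B=\mathcal{B}_i$, $L=\mathcal{L}_i$. Substitute $X_i=\bm{-\nu_i\I & 0.5\I \\ 0.5\I & -\rho_i\I}$, i.e.\ $X_i^{11}=-\nu_i\I$, $X_i^{12}=X_i^{21}=0.5\I$, $X_i^{22}=-\rho_i\I$. As in Remark~\ref{Rm:LocalDissipativity}, set $\tilde{\rho}_i\triangleq\rho_i^{-1}$ so that $-(X_i^{22})^{-1}=\tilde{\rho}_i\I$ is affine in the new variable, preserving the LMI form. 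As in Remark~\ref{Rm:LocalDissipativity2}, replace $\mathcal{C}_i$ by $\I_{n_i}$, so that the entire error state $\mathrm{x}_i(t)$ (not merely $\mathrm{y}_i(t)$) serves as the virtual output -- which is exactly why the asserted EID runs from $\eta_i$ to $\mathrm{x}_i$. Then $\mathcal{C}_iP_i$ and $P_i\mathcal{C}_i^\T$ collapse to $P_i$, while $X_i^{12}\mathcal{C}_iP_i$ and $P_i\mathcal{C}_i^\T X_i^{21}$ collapse to $0.5P_i$; inserting these into \eqref{Eq:Th:LocalControlDesign0} reproduces the first LMI of \eqref{Eq:Th:ImprovedLocalControlDesign} (up to the sign convention adopted for the $\nu_i$- and $\tilde{\rho}_i$-blocks), together with $\mathcal{L}_i=\mathcal{K}_iP_i^{-1}$ and $\rho_i=\tilde{\rho}_i^{-1}$.

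For (b): follow Remark~\ref{Rm:ImprovedLocalDesign}. Denote the global LMI \eqref{Eq:Th:GlobalControlDesign0} by $\Psi>0$ with outer $4\times4$ block form $\Psi=[\Psi_{kl}]_{k,l\in\N_4}$, $\Psi_{kl}=[\Psi_{kl}^{ij}]_{i,j\in\N_N}$. By the block-element-wise permutation identity \cite[Lm. 1]{WelikalaP32022}, $\Psi>0\iff[[\Psi_{kl}^{ij}]_{k,l\in\N_4}]_{i,j\in\N_N}>0$, whose diagonal blocks $[\Psi_{kl}^{ii}]_{k,l\in\N_4}$ are therefore positive definite; replacing the shared $\tilde{\gamma}$ in these diagonal blocks by per-subsystem scalars $\tilde{\gamma}_i>\tilde{\gamma}$ yields \eqref{Eq:LocalControllerDesignConditionStep2}, a necessary condition for global feasibility. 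Then, following Remark~\ref{Rm:ImprovedLocalDesign2}, insert the same IF-OFP $X_i$ into \eqref{Eq:LocalControllerDesignConditionStep2} to obtain \eqref{Eq:LocalControllerDesignConditionStep3}, and collapse it to the $4\times4$ scalar inequality \eqref{Eq:LocalControllerDesignConditionStep4} using the three structural facts $\mathcal{D}_i=\bm{-\I_n \\ \0}$, $\mathcal{E}_{ii}=(1-\tfrac{1}{N})\I_n$, and the vanishing $n$\tsup{th} row of $L_{\eta\mathrm{y}}^{ii}=X_i^{11}\mathcal{B}_i\bar{\mathcal{K}}_{ii}$ (inherited from the zero last row of $\mathcal{B}_i$, since $X_i^{11}=-\nu_i\I$ is a scalar multiple of the identity), while treating $p_i>0$ as a prescribed parameter. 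Adjoining \eqref{Eq:LocalControllerDesignConditionStep4} to the first LMI gives \eqref{Eq:Th:ImprovedLocalControlDesign}, and the $X_i$-EID conclusion is inherited from Theorem~\ref{Th:LocalControlDesign} applied to the first LMI alone.

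There is no deep obstacle here: the work is bookkeeping -- propagating the two substitutions (the IF-OFP specialization and the $\mathcal{C}_i\to\I_{n_i}$ augmentation) through every block of \eqref{Eq:Th:LocalControlDesign0}, and verifying the successive simplifications \eqref{Eq:LocalControllerDesignConditionStep2}$\to$\eqref{Eq:LocalControllerDesignConditionStep3}$\to$\eqref{Eq:LocalControllerDesignConditionStep4} -- which is why the authors can call the proof evident from the remarks. The step I would flag explicitly is affineness in the declared decision set $\{\mathcal{K}_i,P_i,\nu_i,\tilde{\rho}_i,\tilde{\gamma}_i\}$: the first LMI is affine once $\tilde{\rho}_i=\rho_i^{-1}$ is used, but $\rho_i=\tilde{\rho}_i^{-1}$ still enters the scalar LMI \eqref{Eq:LocalControllerDesignConditionStep4} through the $p_i\rho_i$ term, so strictly one should treat $\rho_i$ itself as the free scalar there and resolve the coupling $\rho_i\tilde{\rho}_i=1$ (e.g.\ by a one-dimensional search over $\rho_i$). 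This does not affect the validity of the $X_i$-EID conclusion, which rests solely on the first LMI via Theorem~\ref{Th:LocalControlDesign}, but it is worth stating for a complete argument.
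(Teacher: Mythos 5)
Your proposal is correct and follows essentially the same route the paper intends: the paper omits the proof precisely because it amounts to specializing Theorem~\ref{Th:LocalControlDesign} (via Remarks~\ref{Rm:LocalDissipativity}--\ref{Rm:LocalDissipativity2}) to obtain the first LMI and adjoining the local necessary conditions of Remarks~\ref{Rm:ImprovedLocalDesign}--\ref{Rm:ImprovedLocalDesign2} to obtain the second, with the $X_i$-EID conclusion resting on the first LMI alone. Your flag about the $\rho_i$-versus-$\tilde{\rho}_i$ coupling between the two LMIs is a fair point the paper glosses over, but it does not affect the validity of the stated conclusion.
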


\section{Simulation Results} \label{Sec:SimulationResults}

In this section, we present several simulation results to demonstrate the effectiveness of the proposed SCN control strategy. To conduct the necessary SCN simulations, we specifically developed a highly customizable MATLAB-based SCN simulator (publicly available at Github: \href{https://github.com/shiran27/SupplyChainSimulator}{https://github.com/shiran27/SupplyChainSimulator}).

\subsection{Simulation Setup}

In the conducted experiments, as shown in Fig. \ref{Fig:InitialState}, we used a SCN with $N=3$ supply chains where each supply chain has $n=4$ inventories. 
When simulating this SCN, each discrete-time step was considered as an hour in real-time, and the total simulation length was chosen to be $T = 720$ times steps (i.e., $30$ days).  
The perish rate and the desired inventory level of each inventory $\Sigma_{i,k}$ were selected as $\rho_{i,k} = 0.1$ and $\overline{x}_{i,k} = 500$, respectively, for any $i\in\N_N, k\in\N_n$. 
The transportation delay of each transportation link $\bar{\Sigma}_{i,k}$ was selected randomly as $\tau_{i,k} = 1+\text{randi}(1,4)$ for any $i\in\N_N, k\in\N_n$ (the used delay values are clear from Fig. \ref{Fig:InitialState}). 
The initial states of the inventories and transportation links were selected randomly as $x_{i,k}(1) = \text{randi}(100,900)$ and $\bar{x}_{i,k}(1) = \text{randi}(100,900)$, respectively, for any $i\in\N_N, k\in\N_n$ (the used initial state values are shown in Fig. \ref{Fig:InitialState}).


The inventory loss (waste) values $w_{i,k}(t)$ over $t\in\N_T$ at each $i \in \N_N, k \in \N_n$ were generated using random normal distributions $\mathcal{N}(\overline{w}_{i,k},0.2\overline{w}_{i,k})$ were $\overline{w}_{i,k} \triangleq 10 + 2 \times \text{randi}(1,5) + 2i$. Similarly, the transportation loss (waste) values $\bar{w}_{i,k}(t)$ over $t\in\N_T$ at each $i \in \N_N, k \in \N_n$ were generated using random normal distributions $\mathcal{N}(\overline{\bar{w}}_{i,k},0.2\overline{\bar{w}}_{i,k})$ were $\overline{\bar{w}}_{i,k} \triangleq 10 + 2\times \text{randi}(1,5) + 2i$. To make the time variation of these randomly generated loss profiles $w_{i,k}(t)$ and $\bar{w}_{i,k}(t)$ more realistic, we used exponential smoothing (e.g., $w_{i,k}^{smooth}(t) = \alpha w_{i,k}(t) + (1-\alpha) w_{i,k}(t-1)$) with the smoothing parameter $\alpha = 0.5$.

The demand values $d_{in}(t)$ over $t\in\N_T$ at each $i\in\N_N$ were generated using random normal distributions  $\mathcal{N}(\overline{d}_{in},0.2\overline{d}_{in})$. Here, the mean value $\overline{d}_{in}$ depends on the current ``day of the week,'' and is selected from seven prespecified values randomly generated using the formula $100 + 2\times\text{randi}(1,10N) + 20i$ as shown in Fig. \ref{Fig:MeanDemands}. The complete demand profiles generated using this strategy are shown in Fig. \ref{Fig:Demands}. Similar to before, to make the time variation of the randomly generated demand profiles $d_{in}(t)$ more realistic, we used exponential smoothing with the smoothing parameter $\alpha = 0.1$. It is worth noting that, when evaluating different SCN control strategies, for their steady-state control components  \eqref{Eq:SteadyStateControlLink}, we only used the knowledge of overall mean demand values shown in dashed lines in Figs. \ref{Fig:MeanDemands}-\ref{Fig:Demands}.

\begin{figure}[!t]
    \centering
    \includegraphics[width=1\linewidth]{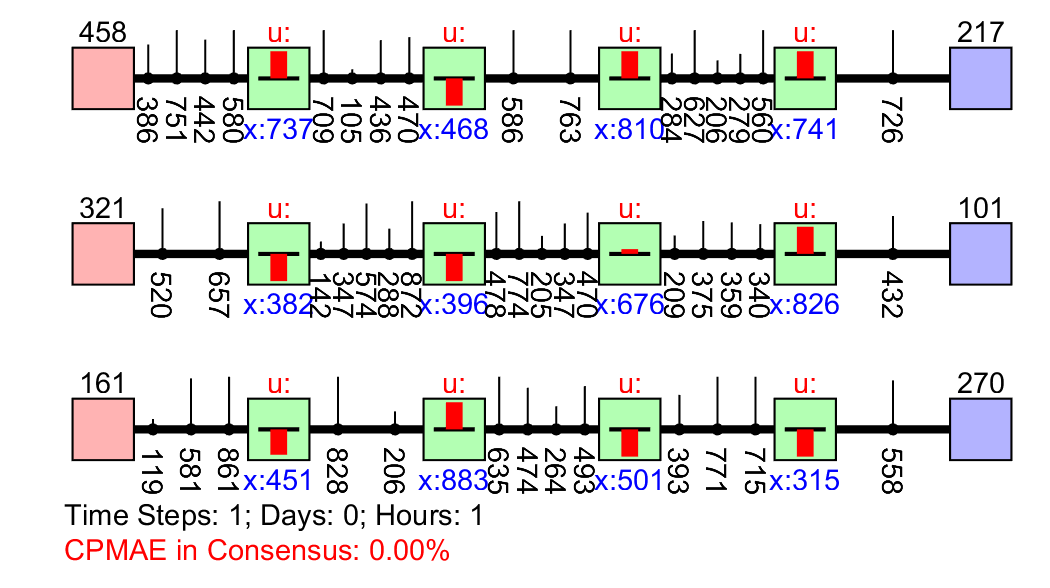}
    \caption{Initial state of the considered SCN: Suppliers (red), inventories (green), demands (blue), links (black).}
    \label{Fig:InitialState}
\end{figure}

\begin{figure}[!t]
\centering
\begin{minipage}{\columnwidth}
    \centering
    \includegraphics[width=0.6\linewidth]{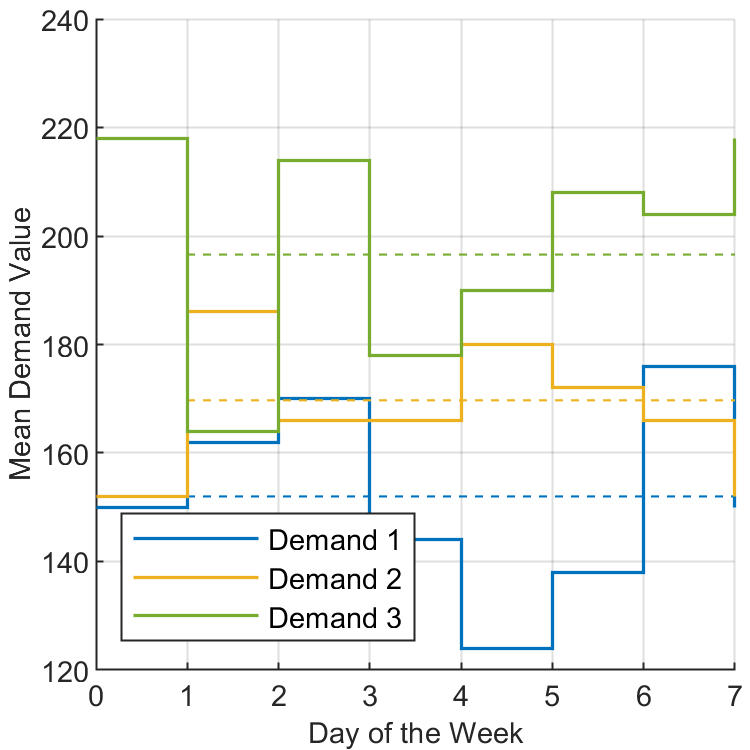}
    \caption{The daily mean customer demand values in a week used for simulating the customer demands (straight lines), and the overall mean customer demand values $\{\overline{d}_{in}: i\in\N_N\}$ used in steady-state controllers  \eqref{Eq:SteadyStateControlLink} (dashed lines).}
    \label{Fig:MeanDemands}
\end{minipage}
\hfill
\begin{minipage}{\columnwidth}
    \centering
    \includegraphics[width=0.6\linewidth]{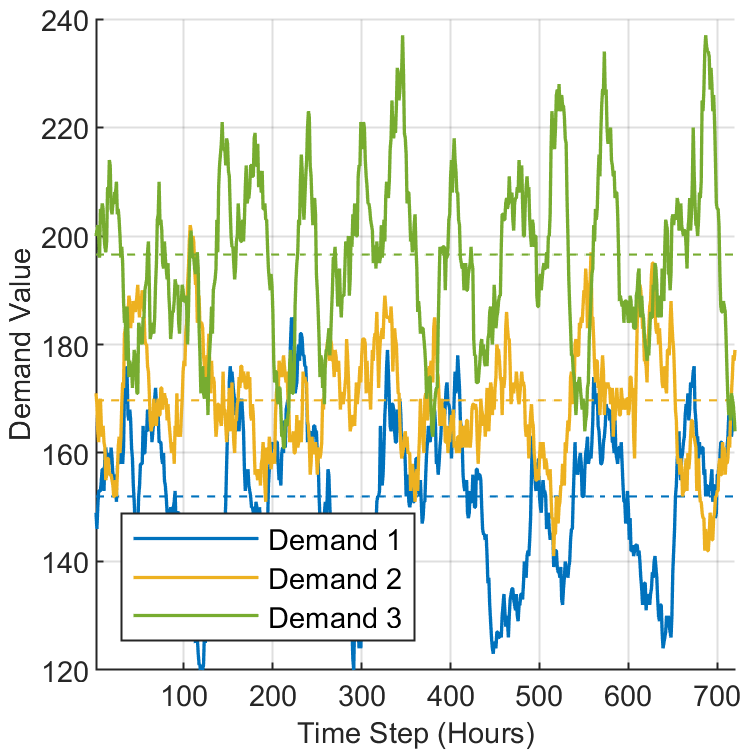}
    \caption{The simulated customer demand values $\{d_{in}(t):i\in\N_N, t\in\N_T\}$ applied to the considered SCN.}
    \label{Fig:Demands}
\end{minipage}
\end{figure}

\subsection{Evaluated SCN Control Strategies}

We considered five different SCN control strategies: 
(i) Local steady-state control (LSSC),
(ii) Local state-feedback control (LSFC),
(iii) Global consensus control (GCC)
(iV) Distributed consensus control under hard topology constraints (DCC-C), and 
(v) Distributed consensus control under soft topology constraints (DCC-U). 
Under the first two methods (LSSC and LSFC), inventories use only local information for their respective supply chains to make control decisions. The third method (GCC) is a generic global consensus control approach inspired by \cite{Oltafi-Saber2004}, and thus, inventories may use complete inventory state information of the SCN. The last two methods (DCC-C and DCC-U) limit inventory state information sharing over the SCN using the proposed distributed consensus control and communication topology co-design approach in this paper.

More specifically, in the LSSC method, inventories place fixed orders based on their downstream mean demand and waste values using the steady-state control law $\overline{u}_i$ \eqref{Eq:SteadyStateControlLink} designed in Prop. \ref{Pr:SteadyStateControl} with $\tilde{u}_i(t) = \0$ for all $i\in \N_N, t\in\N_T$ in \eqref{Eq:SteadyStateControl0}. Besides, the LSFC method adds to the LSSC method the local state-feedback control law $\tilde{u}_i(t)$ \eqref{Eq:LocalStateFeedbackController} with $\tilde{\tilde{u}}_i(t) = \0$ for all $i\in\N_N, t\in\N_T$. To design the local controller gains $\{\mathcal{L}_i, i \in \N_N\}$ required in this LSFC method, we used the dissipativity-based approach given in Prop. \ref{Th:LocalControlDesign}. On the other hand, the GCC method adds to the LSSC method a global consensus controller $\tilde{\tilde{u}}_i(t)$ of the form \eqref{Eq:ConsensusControl} with $\tilde{u}_i(t) = \0$ for all $i\in\N_N, t\in\N_T$. However, in GCC design, unlike in the proposed DCC co-design, the communication topology cannot be constrained and/or optimized, and hence, the GCC method requires a fully connected communication topology, as shown in Fig. \ref{Fig:TopologyGCC}.

The proposed DCC methods use: (i) the steady-state control law $\overline{u}_i$ \eqref{Eq:SteadyStateControlLink} designed in Prop. \ref{Pr:SteadyStateControl}, (ii) the local state-feedback control law $\tilde{u}_i(t)$ \eqref{Eq:LocalStateFeedbackController} designed in Th. \ref{Th:ImprovedLocalControlDesign}, and (iii) the distributed consensus control law $\tilde{\tilde{u}}_i(t)$ \eqref{Eq:ConsensusControl} designed in Th. \ref{Th:GlobalControlDesign}, for all $i\in\N_N, t\in\N_T$. Both DCC methods used a reference topology as shown in Fig. \ref{Fig:RefTop} for the selection of the cost coefficients 
$C \triangleq [[C_{ij}^{kl}]_{k,l\in\N_nj}]_{i,j\in\N_N}$ in \eqref{Eq:Th:GlobalControlDesign}. The distinction between the DCC variants, DCC-C and DCC-U, lies in how the resulting communication topology was constrained when solving the proposed co-design problem in Th. \ref{Th:GlobalControlDesign}. In particular, the DCC-C method included some additional constraints in \eqref{Eq:Th:GlobalControlDesign} (of the form $\bar{\mathcal{K}}_{ij}^{kl}=0$ for some $i,j\in\N_N$ and $k,l\in\N_n$) to constrain the resulting topology to use only the communication links in the reference topology shown in Fig. \ref{Fig:RefTop}. In contrast, the DCC-U method did not use such additional constraints. 

For the considered SCN shown in Fig. \ref{Fig:InitialState}, under the reference topology shown in Fig. \ref{Fig:RefTop}, the obtained optimal communication topologies in DCC-C and DCC-U methods are shown in Figs. \ref{Fig:TopologyHard} and \ref{Fig:TopologySoft}, respectively.

\begin{figure}[t!]
\centering
\begin{minipage}{0.48\textwidth}
\centering
\includegraphics[width=0.6\linewidth]{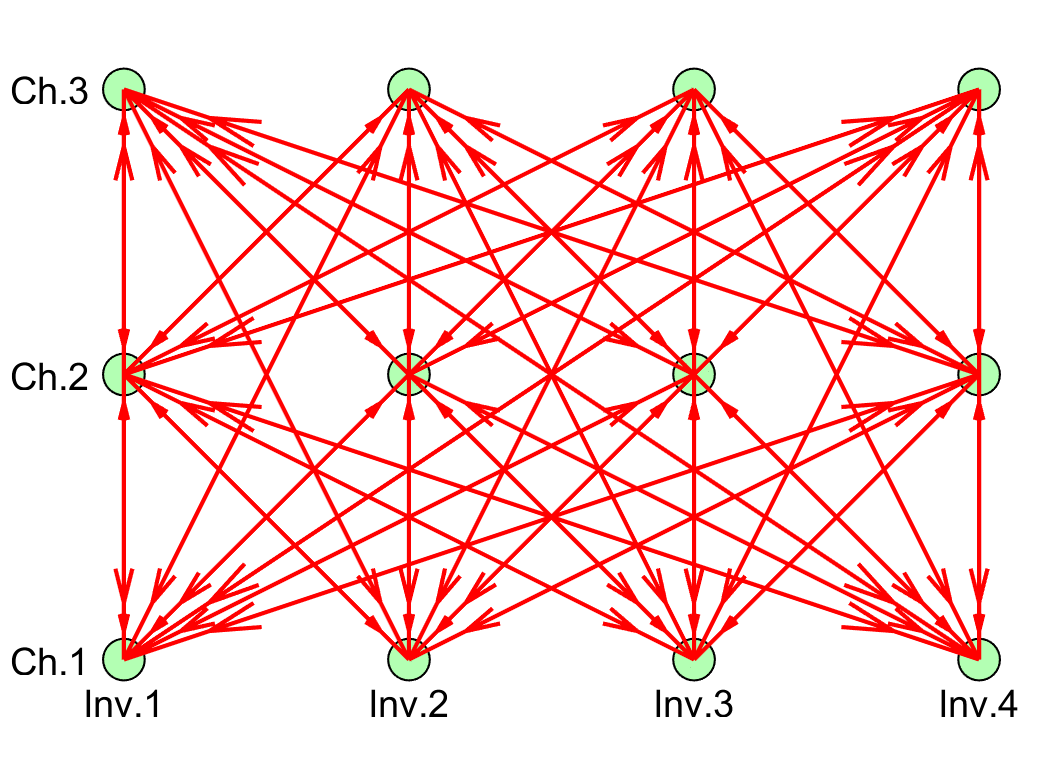}
\caption{Fully connected communication topology used in the conventional global consensus control (GCC) strategy.}
\label{Fig:TopologyGCC}
\end{minipage}
\hfill
\begin{minipage}{0.48\textwidth}
    \centering
    \includegraphics[width=0.6\linewidth]{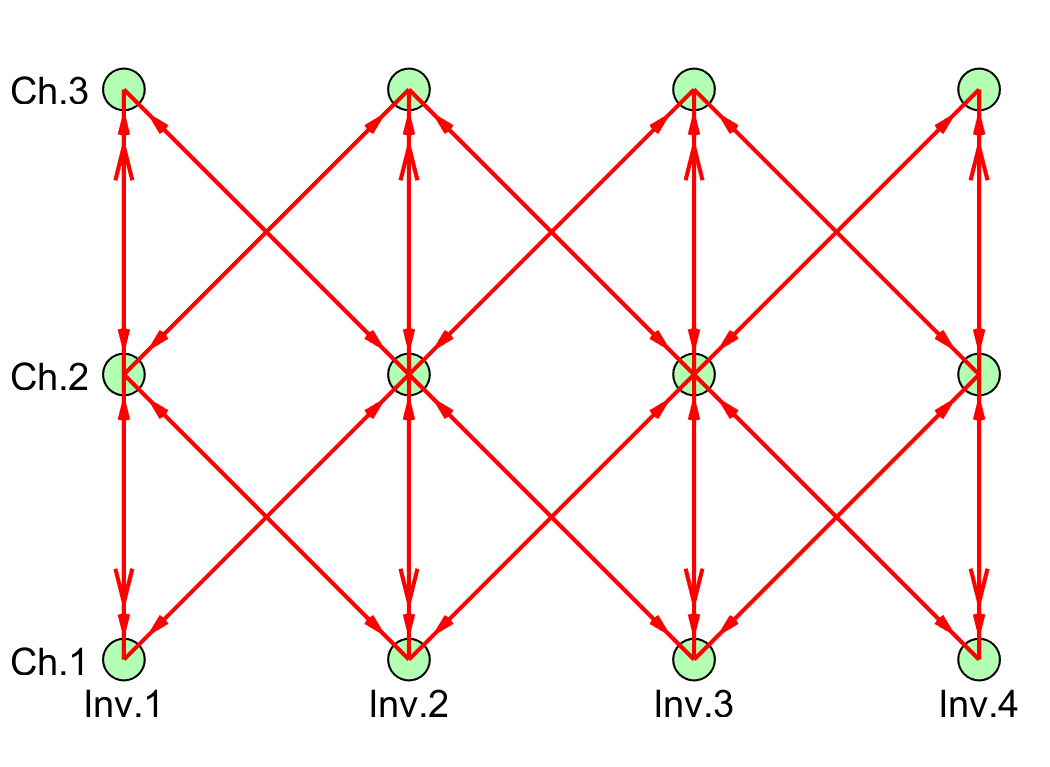}
    \caption{Reference topology used for selecting the cost coefficients $c \triangleq [c_{ij}]_{i,j\in\N_N}$ \eqref{Eq:Th:GlobalControlDesign} in the proposed co-design method for DCC-C and DCC-U strategies.}
    \label{Fig:RefTop}
\end{minipage}
\\
\begin{minipage}{0.48\textwidth}
    \centering
    \includegraphics[width=0.6\linewidth]{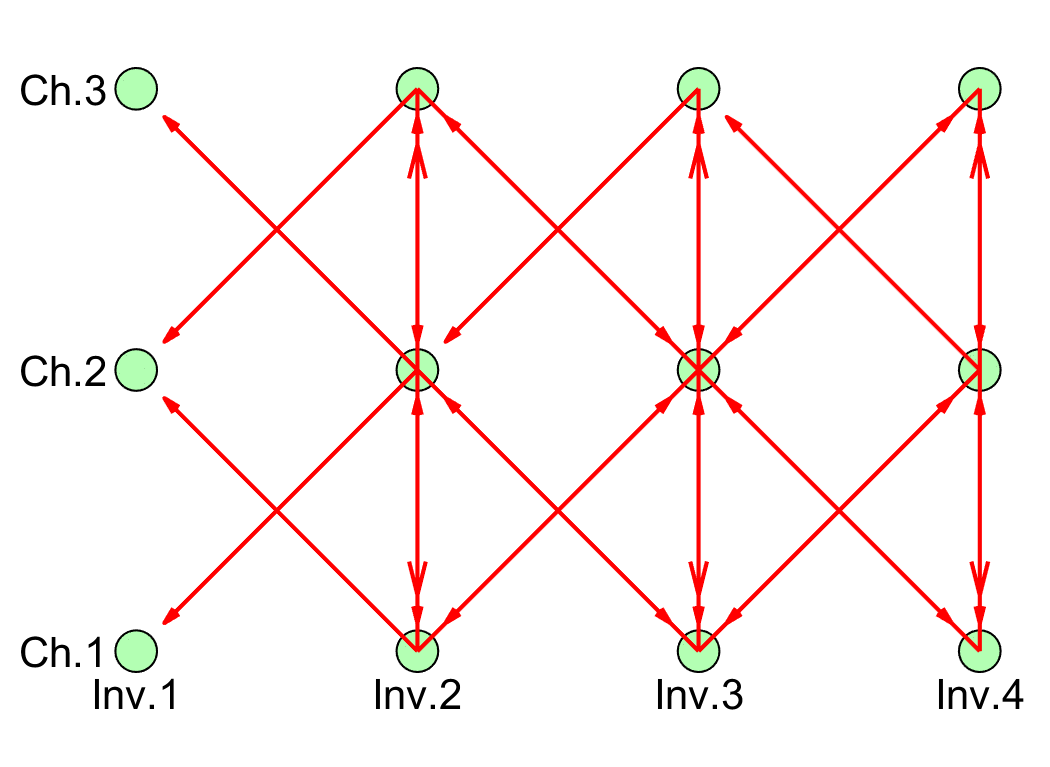}
    \caption{Obtained optimal topology from the proposed co-design method for the DCC-C strategy.}
    \label{Fig:TopologyHard}
\end{minipage}
\hfill
\begin{minipage}{0.48\textwidth}
    \centering
\includegraphics[width=0.6\linewidth]{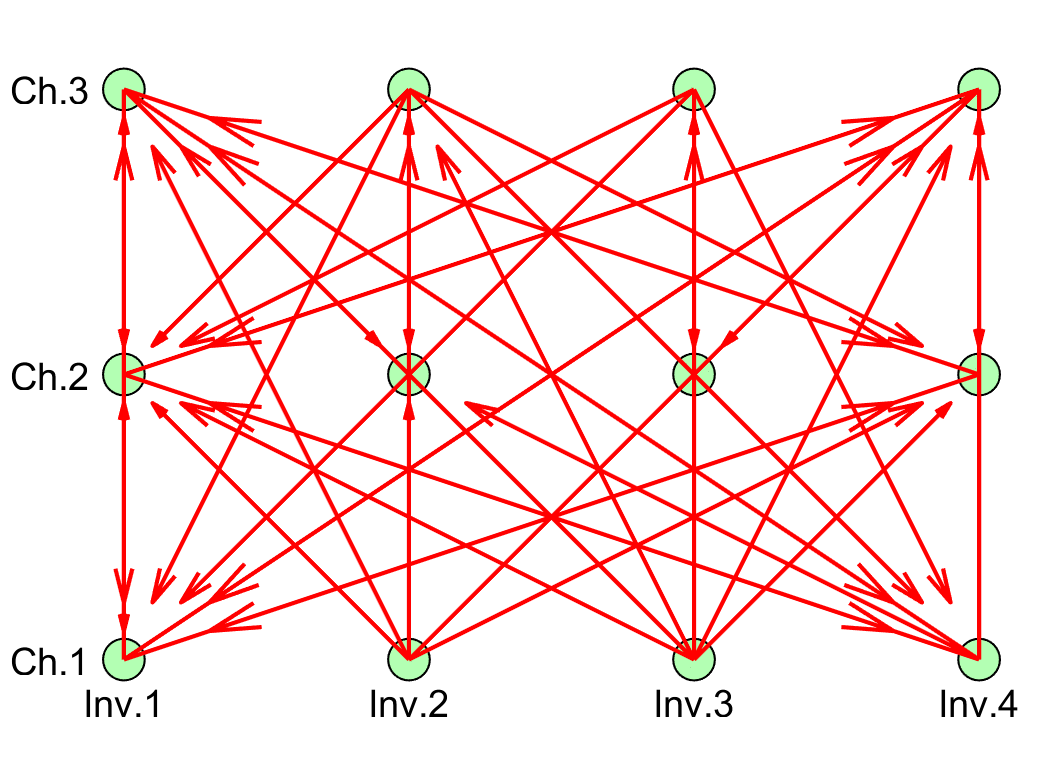}
    \caption{Obtained optimal topology from the proposed co-design method for the DCC-U strategy.}
    \label{Fig:TopologySoft}
\end{minipage}
\end{figure}

\subsection{Observations}

To evaluate the consensus performance of different SCN control methods, we used the percentage mean absolute error (PMAE) in consensus metric defined as 
$$
\text{PMAE}\,(t) \triangleq \left[\frac{1}{nN} \Vert z(t)\Vert_1 \right] \times \frac{100}{\overline{X}} \%
$$
where $z(t) \triangleq [z_i(t)]_{i\in\N_N}$ is the observed performance outputs vector defined in \eqref{Eq:ConsensusPerformance}-\eqref{Eq:SupplyChainNetworkInterconnection}, and $\overline{X}$ was chosen as $\overline{X} \triangleq 500$ inspired by the used desired inventory levels choice $\overline{x}_{i,k} = 500, \forall i\in\N_N,k\in\N_n$.   
By averaging the observed PMAE profiles over a large number of random realizations (Monte Carlo simulations), we also computed the Average PMAE (APMAE) metric (i.e., $\text{APMAE}(t) \triangleq \mathbb{E}(\text{PMAE}(t))$). On the other hand, by averaging the observed PMAE and APMAE profiles over time, we also computed the cumulative PMAE (CPMAE) and cumulative APMAE (CAPMAE) metrics, respectively (e.g.,  
$\text{CPMAE}(t) 
\triangleq \frac{1}{t}\sum_{\tau\in\N_t} \text{PMAE}(t)$).

When simulating the considered SCN (at each realization), besides the randomly generated quantities like inventory/transportation losses and customer demands, we also randomly generated initial conditions (at $t=1$), transportation failures (at $t = 240$), and inventory failures (at $t = 480$). In particular, transportation failures occur at two randomly selected transportation links in the SCN where each selected (``failed'') link will lose all of its carrying products. Similarly, inventory failures occur at four randomly selected inventories in the SCN, where each selected (``failed'') inventory will lose all of its stored products.

The observed PMAE, APMAE and CAPMAE profiles under different considered SCN control methods are shown in Figs. \ref{Fig:ConsensusPerformance}, \ref{Fig:AverageConsensusPerformance} and \ref{Fig:AverageCumulativeConsensusPerformance}, respectively. The final CAPMAE values (obtained from Fig. \ref{Fig:AverageCumulativeConsensusPerformance}) are summarized in Tab. \ref{Tab:ConsensusPerformance}.

\begin{figure*}[t]
\centering
\begin{minipage}{0.32\textwidth}
    \centering
    \includegraphics[width=\linewidth]{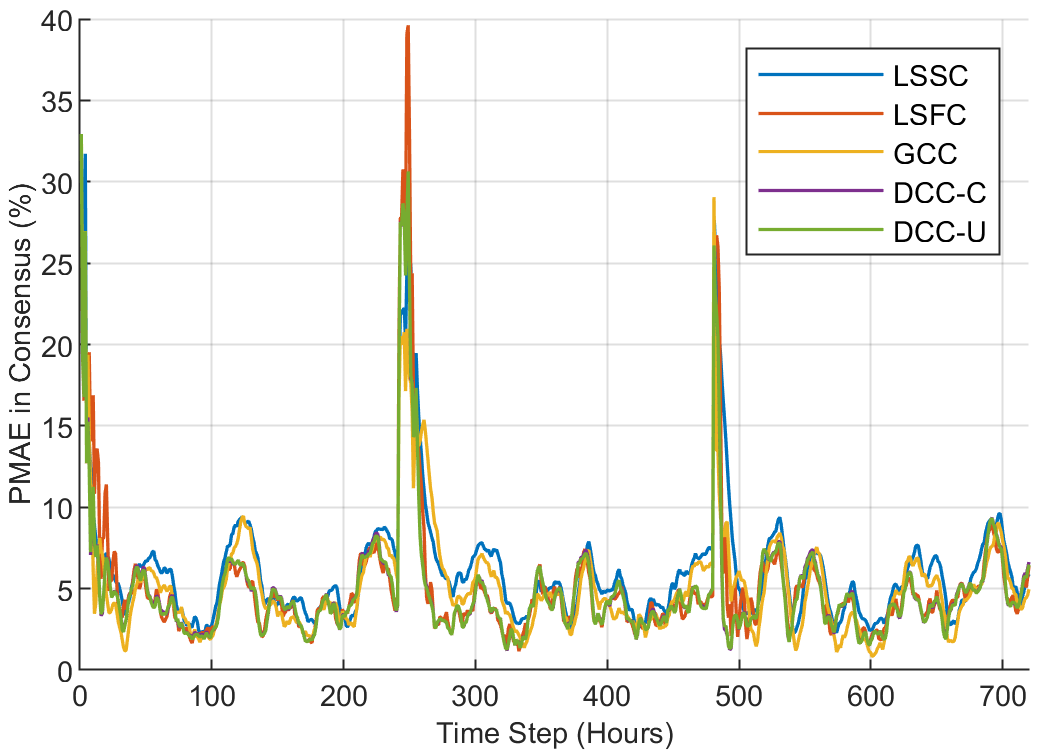}
\caption{Consensus performance (PMAE in consensus) observed under different SCN control strategies.}
\label{Fig:ConsensusPerformance}
\end{minipage}
\hfill
\begin{minipage}{0.32\textwidth}
    \centering
    \includegraphics[width=\linewidth]{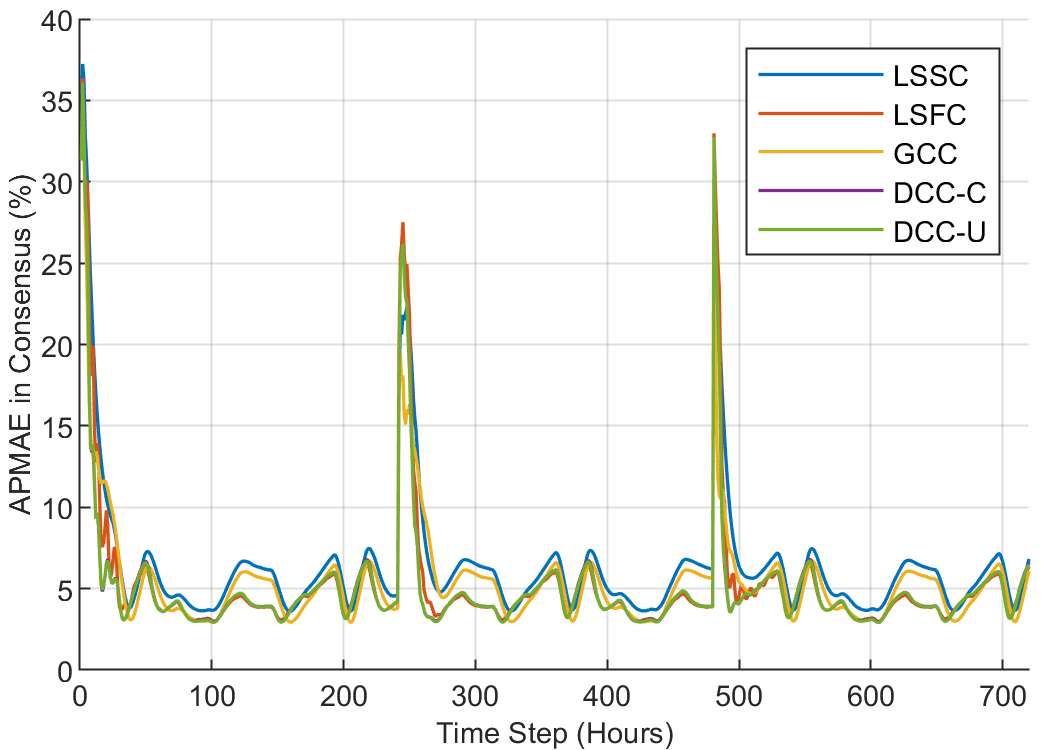}
\caption{Average PMAE in consensus (over 1000 realizations) observed under different SCN control strategies.}
\label{Fig:AverageConsensusPerformance}
\end{minipage}
\hfill
\begin{minipage}{0.32\textwidth}
    \centering
    \includegraphics[width=\linewidth]{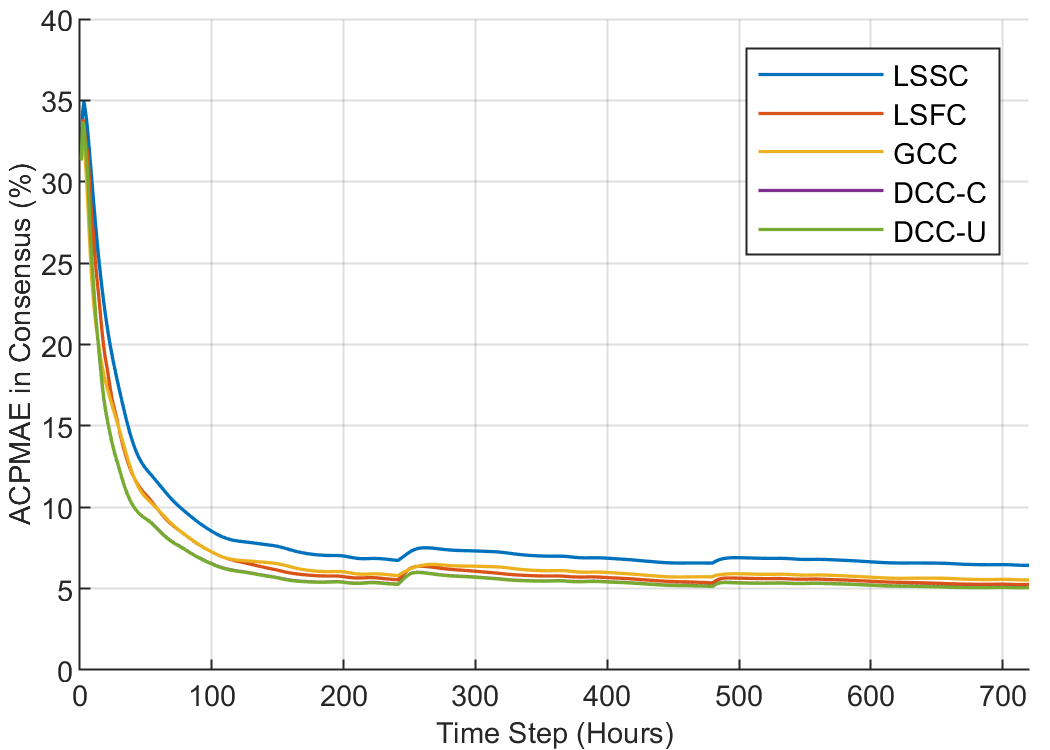}
\caption{Average PCMAE in consensus (over 1000 realizations) observed under different SCN control strategies.}
\label{Fig:AverageCumulativeConsensusPerformance}
\end{minipage}
\end{figure*}

\begin{figure}[!t]
\centering
    \includegraphics[width=1\linewidth]{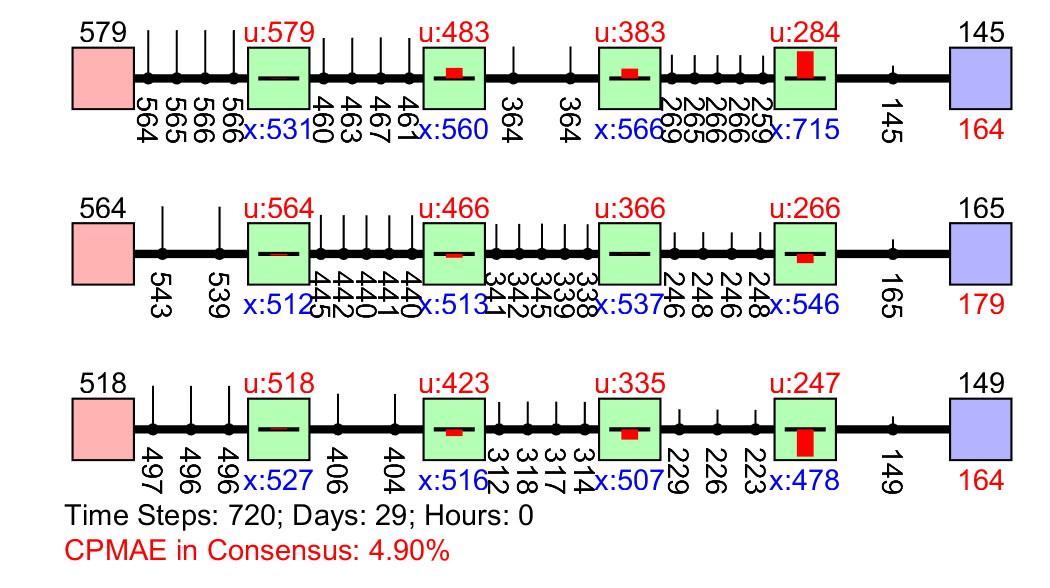}
    \caption{Final state of the SCN under the proposed SCN control strategy (i.e., the DCC-U strategy).}
    \label{Fig:FinalStateDCSGC}
\end{figure}

\begin{table}[!t]
    \centering
    \caption{Overall consensus performance observed under different SCN control methods.}
    \begin{tabular}{|c|c|} \hline\
    SCN Control Method   & Final CAPMAE Value (\%)\\ \hline
      LSSC      & 6.432 \\  \hline
      LSFC      & 5.256 \\ \hline
      GCC       & 5.532 \\ \hline
      DCC-C     & 5.070 \\ \hline
      DCC-U     & \textbf{5.059} \\ \hline
    \end{tabular}
    \label{Tab:ConsensusPerformance}
\end{table}

\subsection{Discussion}

From the reported observations, it is clear that the proposed DCC methods outperform conventional methods like LSSC, LSFC, and GCC in terms of maintaining a consensus among the supply chains in the SCN - in the presence of considerable uncertainties and failures. It is worth highlighting that the proposed DCC-U method provides the fastest convergence to a consensus from the applied diverse initial conditions and injected transportation and inventory failures. The final state of the SCN observed under the proposed DCC-U strategy in a simulation is shown in Fig. \ref{Fig:FinalStateDCSGC}. 

The observed superior consensus control performance of the DCC-U method is partly due to the specifically designed communication topology shown in Fig. \ref{Fig:TopologySoft}. In this communication topology, several key features can be observed: 
\begin{enumerate}
    \item It is suboptimal to use all the communication links. 
    \item More information is required at the inventories located at the corners of the SCN as opposed to those located at the middle of the SCN.
    \item Communication links between non-adjacent inventories are preferred as opposed to those between adjacent inventories.
\end{enumerate}
These observations provide crucial yet non-trivial intuitions regarding suitable communication topologies for distributed consensus control in SCNs to minimize the adverse bullwhip and ripple effects. 


\section{Conclusion}\label{Sec:Conclusion}
In this paper, we developed a robust distributed inventory consensus control solution for supply chain networks. Maintaining such a consensus among inventories synchronizes parallel supply chains in the SCN while enhancing coordination and robustness of the overall SCN, mitigating both bullwhip and ripple effects.
For this purpose, we modeled the SCN as a networked system comprised of dynamic subsystems with some disturbance inputs and performance outputs. We then used the dissipativity theory to develop a co-design strategy for distributed consensus controllers and their underlying communication topology. The proposed dissipativity-based approach, due to its omission of exact dynamic models, provides a co-design that is inherently robust to disturbances. Moreover, the formulated co-design problem is convex, enabling efficient solutions. The resulting consensus control solutions outperformed the conventional static steady-state control, local dynamic control, as well as global consensus control strategies. Moreover, the resulting optimal communication topologies provide useful insights into effective communication topologies for consensus control of SCNs. Future work is directed towards including the design of adaptive control laws to estimate and update the involved consensus controller parameters like the mean transportation and inventory loss values and mean customer demand values.

\appendices
\section{Proof of the Proposition \ref{Pr:LTISystemXDisspativity}}
\label{App:Pr:LTISystemXDisspativity}

Let $\tilde{x}(t) \triangleq x(t) - x^*$, $\tilde{u}(t) \triangleq u(t) - u^*$ and $\tilde{y}(t) \triangleq y(t) - y^*$, where $x^* \in \mathcal{X}$ is an equilibrium state of \eqref{Eq:Pr:LTISystemXDisspativity1}, and $u^*$ and $y^*$ represent a corresponding input and output combination. Consequently, $x^*,u^*$ and $y^*$ satisfy
$$
x^* = A x^* + B u^*, \quad \mbox{ and } \quad
y^* = Cx^* +Du^*,    
$$
which, together with \eqref{Eq:Pr:LTISystemXDisspativity1} give the ``error'' dynamics:
\begin{equation}\label{Eq:Pr:LTISystemXDisspativity1Step1}
\begin{aligned}
\tilde{x}(t+1) = A\tilde{x}(t) + B\tilde{u}(t),\\
\tilde{y}(t) = C\tilde{x}(t) + D\tilde{u}(t).
\end{aligned}
\end{equation}

For \eqref{Eq:Pr:LTISystemXDisspativity1}, consider the candidate quadratic storage function 
\begin{equation}
    V(x(t),x^*) \equiv V(\tilde{x}(t)) \triangleq \tilde{x}^\T(t) P \tilde{x}(t),
\end{equation}
where $P$ is to be determined such that $P>0$. Using \eqref{Eq:Pr:LTISystemXDisspativity1Step1} we now can evaluate 
$\Delta V(\tilde{x}(t)) \triangleq V(\tilde{x}(t+1)) - V(\tilde{x}(t))$ as
\begin{align}
\Delta V 
=&\ \left(\bm{A & B} \bm{\tilde{x}(t) \\ \tilde{u}(t)}\right)^\T P \, \bigstar - \tilde{x}^\T(t) P \tilde{x}(t) \nonumber \\
=&\ \bm{\tilde{x}(t) \\ \tilde{u}(t)}^\T \bm{V_1-V_2} \bm{\tilde{x}(t) \\ \tilde{u}(t)}, 
\label{Eq:Pr:LTISystemXDisspativity1Step2}
\end{align}
where 
\begin{equation*}
\begin{aligned}
V_1 \triangleq  \bm{A & B}^\T P \bm{A & B} \mbox{ and }
V_2 \triangleq \bm{\I & \0}P \bm{\I & \0}.    
\end{aligned}
\end{equation*}

For the $X$-EID of \eqref{Eq:Pr:LTISystemXDisspativity1}, we require the supply rate function $s(\tilde{u}(t),\tilde{y}(t)) \triangleq \bm{\tilde{u}^\T(t) &  \tilde{y}^\T(t)} X \bm{\tilde{u}^\T(t) &  \tilde{y}^\T(t)}^\T$. Using \eqref{Eq:Pr:LTISystemXDisspativity1Step1} we can simplify this storage function $s$ into the form
\begin{align}
s =&\   
\bm{\tilde{y}(t) \\ \tilde{u}(t)}^\T 
\bm{X^{22} & X^{21} \\ X^{12} & X^{11}} \bm{\tilde{y}(t) \\ \tilde{u}(t)} \nonumber\\
=&\ \left(\bm{C & D \\ \0 & \I}\bm{\tilde{x}(t) \\ \tilde{u}(t)}\right)^\T \bm{X^{22} & X^{21} \\ X^{12} & X^{11}} \bigstar \nonumber\\
=& \bm{\tilde{x}(t) \\ \tilde{u}(t)}^\T S \bm{\tilde{x}(t) \\ \tilde{u}(t)},
\label{Eq:Pr:LTISystemXDisspativity1Step3}
\end{align}
where
$$ 
\scriptsize 
S \triangleq \bm{C^\T X^{22} C & C^\T X^{22} D \\ D^\T X^{22}C + X^{12}C & D^\T X^{22}D + X^{12}D + D^\T X^{21} + X^{11}}.$$ 

Using \eqref{Eq:Pr:LTISystemXDisspativity1Step2} and \eqref{Eq:Pr:LTISystemXDisspativity1Step3}, the required differential dissipativity inequality (see Def. \ref{Def:EID}) can now be expressed as 
\begin{equation} \label{Eq:Pr:LTISystemXDisspativity1Step4}
\Delta V \leq s \iff S-(V_1-V_2) \geq 0.
\end{equation}
As shown in \cite[Lm. 1]{WelikalaP52022}, for any $\Gamma, \Phi, \Theta$ such that $\Gamma = \Gamma^\T$ and $\Theta > 0$: 
\begin{equation}
\label{Eq:Pr:LTISystemXDisspativity1Step5}
\Gamma - \Phi^\T \Theta \Phi > 0 \iff \bm{\Theta & \Theta \Phi \\ \Phi^\T \Theta & \Gamma} > 0.
\end{equation}
Using this result with $\Gamma \triangleq (S+V_2)$, $\Phi \triangleq \bm{A & B}$ and $\Theta = P$ (note that $\Phi^\T \Theta \Phi = V_1$), we can obtain equivalent condition for \eqref{Eq:Pr:LTISystemXDisspativity1Step4} in the form  \eqref{Eq:Pr:LTISystemXDisspativity2}. This completes the proof.


\section{Proof of Corollary \ref{Co:LTISystemXDisspativation}}
\label{App:Co:LTISystemXDisspativation}


In Prop. \eqref{Pr:LTISystemXDisspativity}, replacing $A$, $B$ and $D$ respectively with $A+BL$, $\I$ and $\0$, we can obtain its main LMI constraint as
$$\Gamma \triangleq 
\scriptsize \bm{
P & PA+PBL & P \\
A^\T P + L^\T B^\T P & P + C^\T X^{22} C & C^\T X^{21} \\
P & X^{12} C  & X^{11}
} 
\normalsize \geq 0$$
This can be written as 
$\Gamma - \Phi^\T \Theta \Phi \geq 0$ 
where 
\begin{equation*}
\begin{aligned}
\Gamma \triangleq&\ 
\scriptsize \bm{
P & PA+PBL & P \\
A^\T P + L^\T B^\T P & P & C^\T X^{21} \\
P & X^{12} C  & X^{11}
} \normalsize, \\
\Phi \triangleq&\ \bm{\0 & C & \0}, \quad \mbox{ and } \quad 
\Theta \triangleq \bm{-X_{22}}.
\end{aligned}
\end{equation*}
Using \eqref{Eq:Pr:LTISystemXDisspativity1Step5}, we next can obtain an equivalent condition as
$$
\scriptsize
\bm{ 
(-X^{22})^{-1} & \0 & C & \0 \\
\0 & P & PA+PBL & P \\
C^\T & A^\T P + L^\T B^\T P & P  & C^\top X^{21}\\
\0 & \I & X^{12} C  & X^{11}} 
\normalsize  \geq 0.
$$
Using the congruence concept, by pre- and post- multiplying with $\diag(\bm{\I & P^{-1} & P^{-1} & \I})$ and subsequently changing variables using $Q \triangleq P^{-1}$, we get
$$
\scriptsize
\bm{ 
(-X^{22})^{-1} & \0 & CQ & \0 \\
\0 & Q & AQ+BLQ & \I \\
QC^\T  & QA^\T + QL^\T B^\T & Q  & QC^\top X^{21}\\
\0 & \I & X^{12} CQ  & X^{11}} 
\normalsize  \geq 0.
$$
Finally, using the change of variables $P \triangleq Q$ and $K \triangleq LP$ (respectively), we can obtain an equivalent LMI condition in the form  
\eqref{Eq:Co:LTISystemXDisspativation2}, which completes the proof.

\section{Proof of Proposition \ref{Pr:NSC2Synthesis}}
\label{App:Pr:NSC2Synthesis}


For the discrete-time networked system $\Sigma$, consider an equilibrium state $x^* \in \mathcal{X}$ where the corresponding networked system equilibrium input and output are $w^*$ and $z^*$, respectively. As $x^* = [x_i^*]_{i\in\N_N}$ where each $x_i^* \in \mathcal{X}_i$, the corresponding subsystem equilibrium inputs and outputs are concatenated in $u^* = [u_i^*]_{i\in\N_N}$ and $y^* = [y_i^*]_{i\in\N_N}$, respectively. Recall that there exists an implicit relationship between $w^*, z^*, u^*$ and $y^*$ such that:
\begin{equation}
\label{Eq:Pr:NSC2SynthesisStep1}
\bm{u^* \\ z^*} = \bm{M_{uy} & M_{uw} \\ M_{zy} & M_{zw}} \bm{y^* \\ w^*}.
\end{equation}
Let 
$\tilde{x}(t) = x(t) - x^*$, 
$\tilde{w}(t) = w(t) - w^*$, 
$\tilde{z}(t) = z(t) - z^*$, 
$\tilde{u}(t) = u(t) - u^*$ and 
$\tilde{y}(t) = y(t) - y^*$.
Using \eqref{Eq:NSC2Interconnection} and \eqref{Eq:Pr:NSC2SynthesisStep1}, it is easy to show that
\begin{equation}
\label{Eq:Pr:NSC2SynthesisStep2}
\bm{\tilde{u}(t) \\ \tilde{z}(t)} = \bm{M_{uy} & M_{uw} \\ M_{zy} & M_{zw}} \bm{\tilde{y}(t) \\ \tilde{w}(t)}.
\end{equation}

We propose the storage function 
\begin{equation}
    \label{Eq:Pr:NSC2SynthesisStep3}
V(x(t),x^*) \triangleq \sum_{i\in\N_N} p_i V_i(x_i(t),x_i^*)    
\end{equation}
for the networked system, where each $p_i>0$ is a scalar parameter and $V_i(x_i(t),x_i^*)$ is the storage function of the subsystem $\Sigma_i, i\in\N_N$. Let 
$\Delta V \triangleq V(x(t+1),x^*) - V(x(t),x^*)$ and 
$\Delta V_i \triangleq V_i(x_i(t+1), x_i^*) - V_i(x_i(t+1),x_i^*), \forall i\in\N_N$. Using \eqref{Eq:Pr:NSC2SynthesisStep3}, we get:
\begin{align*}
    \Delta V &= \sum_{i\in\N_N} p_i \Delta V_i \\
    &\leq \sum_{i\in\N_N}   \bm{\tilde{u}_i(t) \\ \tilde{y}_i(t)}^\T p_i X_i \bm{\tilde{u}_i(t) \\ \tilde{y}_i(t)}\\ 
    &= \bm{\tilde{u}(t) \\ \tilde{y}(t)}^\T X_p \bm{\tilde{u}(t) \\ \tilde{y}(t)},
\end{align*} 
where the second step is a consequence of each subsystem $\Sigma_i, i\in \N_N$ being $X_i$-EID and the last step is a result of vectorization, with $X_p \triangleq [X_p^{kl}]_{k,l\in\N_2}$ and $X_p^{kl} \triangleq \diag[p_i X_i^{kl}]_{i\in\N_N}, \forall k,l\in\N_2$. Subsequently, using \eqref{Eq:Pr:NSC2SynthesisStep2}, we can obtain
\begin{equation}
    \label{Eq:Pr:NSC2SynthesisStep4}
    \Delta V \leq \left(\bm{M_{uy} & M_{yw}\\ \I & \0}\bm{\tilde{y}(t)\\ \tilde{w}(t)}\right)^\T X_p \bigstar   
\end{equation}

For the $\textbf{Y}$-EID of the networked system, we require the difference dissipativity inequality $\Delta V \leq s(\tilde{w}(t),\tilde{z}(t))$ to hold, where 
$s(\tilde{w}(t),\tilde{z}(t)) \triangleq \bm{\tilde{w}^\T(t) &  \tilde{z}^\T(t)} \textbf{Y} \bm{\tilde{w}^\T(t) &  \tilde{z}^\T(t)}^\T$ denotes the supply rate function. Note that, using \eqref{Eq:Pr:NSC2SynthesisStep2} we can simplify this supply rate function $s$ into the form
\begin{align}
s =&\   
\bm{\tilde{z}(t) \\ \tilde{w}(t)}^\T 
\bm{\textbf{Y}^{22} & \textbf{Y}^{21} \\ \textbf{Y}^{12} & \textbf{Y}^{11}} \bm{\tilde{z}(t) \\ \tilde{w}(t)} \nonumber\\
=&\ \left(\bm{M_{zy} &  M_{zw} \\ \0 & \I}\bm{\tilde{y}(t) \\ \tilde{w}(t)}\right)^\T \bm{\textbf{Y}^{22} & \textbf{Y}^{21} \\ \textbf{Y}^{12} & \textbf{Y}^{11}}  \bigstar 
\label{Eq:Pr:NSC2SynthesisStep5}
\end{align}

Using \eqref{Eq:Pr:NSC2SynthesisStep4} and \eqref{Eq:Pr:NSC2SynthesisStep5}, a sufficient condition for $\Delta V \leq s$ (i.e., $\textbf{Y}$-EID of the networked system) can be identified as 
\begin{align}
&\bm{M_{uy} & M_{yw}\\ \I & \0}^\T X_p \bigstar   
\leq 
\bm{M_{zy} &  M_{zw} \\ \0 & \I}^\T \bm{\textbf{Y}^{22} & \textbf{Y}^{21} \\ \textbf{Y}^{12} & \textbf{Y}^{11}}  \bigstar  \nonumber\\
&\iff 
0 \leq \Gamma - \bm{M_{uy} & M_{yw}}^\T X_p^{11}\bigstar \\
&- \bm{M_{zy} &  M_{zw}}^\T (-\textbf{Y}^{22}) \bigstar
\label{Eq:Pr:NSC2SynthesisStep6}
\end{align}
where 
$$
\Gamma \triangleq
\scriptsize
\bm{-M_{uy}^\T X_p^{12} - X_p^{21}M_{uy} - X_p^{22} & -X_p^{21}M_{yw} + M_{zy}^\T \textbf{Y}^{21} \\  - M_{yw}^\T X_p^{12} + \textbf{Y}^{12}M_{zy} & 
M_{zw}^\T \textbf{Y}^{21} + \textbf{Y}^{12}M_{zw} + \textbf{Y}^{11}}
\normalsize.
$$
Finally, under Assumptions \ref{As:NegativeDissipativity} and \ref{As:PositiveDissipativity}, using the technical result \eqref{Eq:Pr:LTISystemXDisspativity1Step5} and change of variables $\bm{L_{uy} & L_{uw}} \triangleq X_p^{11}\bm{M_{uy} & M_{uw}}$, we can obtain equivalent conditions for \eqref{Eq:Pr:NSC2SynthesisStep6} as given in \eqref{Eq:Pr:NSC2Synthesis2} and \eqref{Eq:Pr:NSC2Synthesis2Alternative}. This completes the proof. 


\bibliographystyle{IEEEtran}
\bibliography{References}

\begin{IEEEbiography}[{\includegraphics[width=1in,height=1.25in,clip,keepaspectratio]{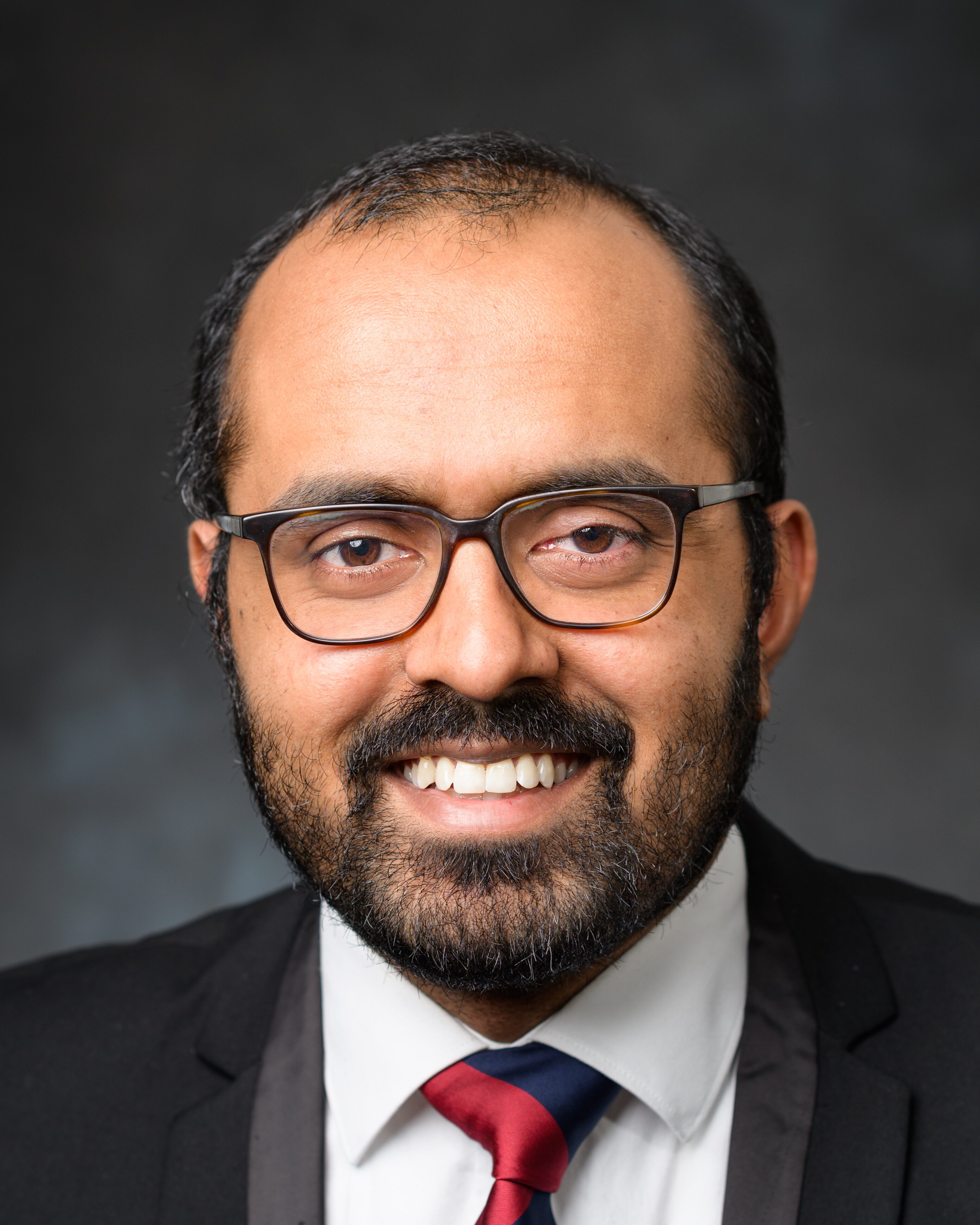}}]{Shirantha Welikala} (Member, IEEE) 
received the B.Sc. degree in electrical and electronic engineering from University of Peradeniya, Peradeniya, Sri Lanka, in 2015 and the M.Sc. and the Ph.D. degrees in systems engineering from Boston University, Brookline, MA, USA, in 2019 and 2021, respectively. From 2015 to 2017, he was a Temporary Instructor/Research Assistant in the Department of Electrical and Electronic Engineering, University of Peradeniya, Sri Lanka. From 2021 to 2023, he was a Postdoctoral Research Fellow in the Department of Electrical Engineering, University of Notre Dame, Notre Dame, IN, USA. He is currently an Assistant Professor in the Department of Electrical and Computer Engineering, Stevens Institute of Technology, Hoboken, NJ, USA. His main research interests include control and optimization of cooperative multi-agent systems, control of networked systems, passivity, machine-learning, robotics, and smart grid. He is a recipient of several awards, including the 2015 Ceylon Electricity Board Gold Medal, the 2019 and 2023 President's Awards for Scientific Research in Sri Lanka, and the 2021 Outstanding Ph.D. Dissertation Award in Systems Engineering.
\end{IEEEbiography}

\begin{IEEEbiography}[{\includegraphics[width=1in,height=1.25in,clip,keepaspectratio]{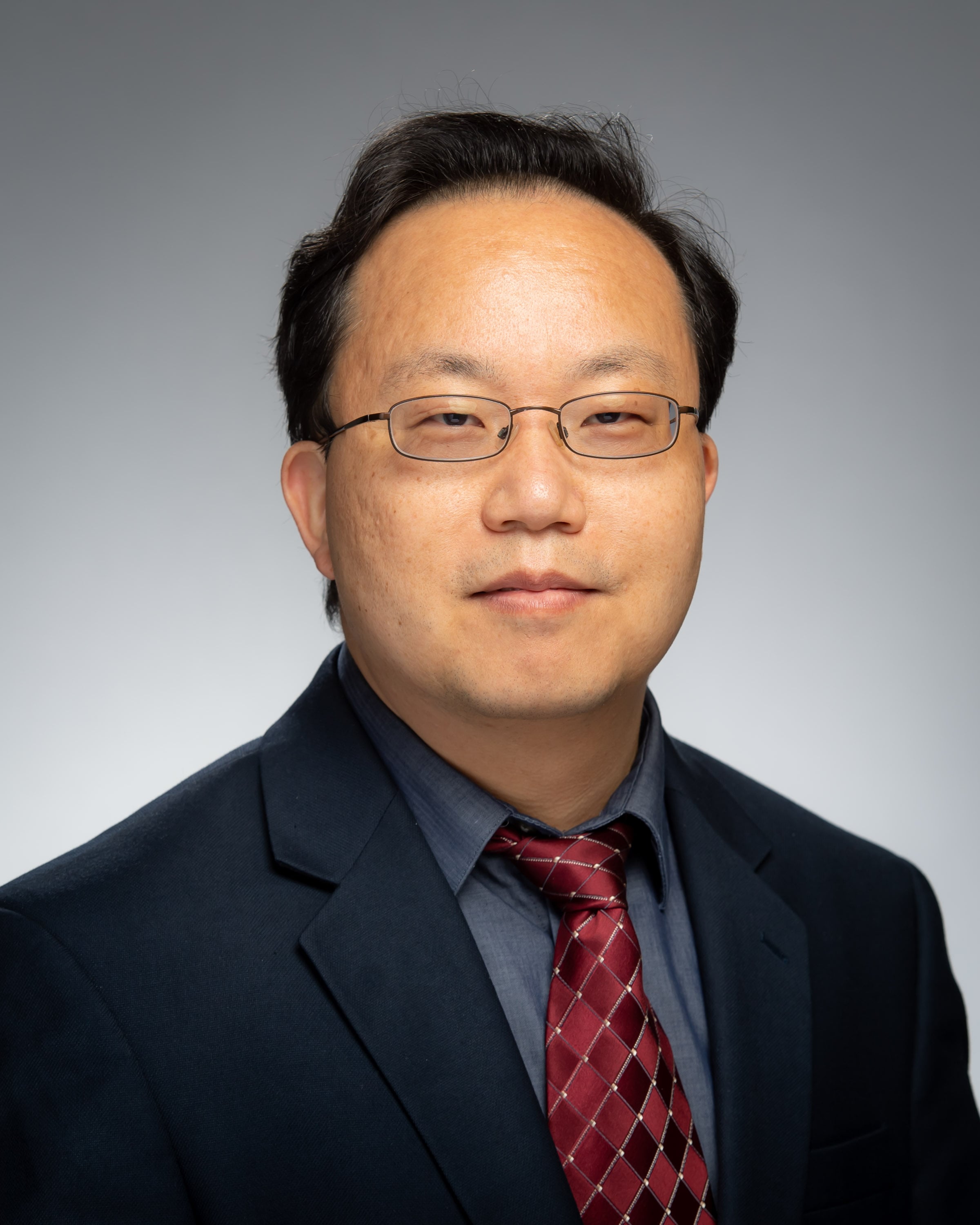}}]{Hai Lin} (Senior Member, IEEE) 
is a professor at the Department of Electrical Engineering, University of Notre Dame, where he got his Ph.D. in 2005. Before returning to his {\em alma mater}, he worked as an assistant professor in the National University of Singapore from 2006 to 2011. Dr. Lin's teaching and research activities focus on the multidisciplinary study of fundamental problems at the intersections of control theory, machine learning and formal methods. His current research thrust is motivated by challenges in cyber-physical systems, long-term autonomy, multi-robot cooperative tasking, and human-machine collaboration. Dr. Lin has served on several committees and editorial boards, including {\it IEEE Transactions on Automatic Control}. He served as the chair for the IEEE CSS Technical Committee on Discrete Event Systems from 2016 to 2018, the program chair for IEEE ICCA 2011, IEEE CIS 2011 and the chair for IEEE Systems, Man and Cybernetics Singapore Chapter for 2009 and 2010. He is a senior member of IEEE and a recipient of 2013 NSF CAREER award.
\end{IEEEbiography}

\begin{IEEEbiography}[{\includegraphics[width=1in,height=1.25in,clip,keepaspectratio]{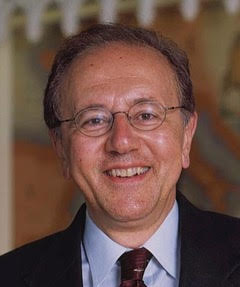}}]{Panos J. Antsaklis} (Fellow, IEEE) 
is the H.C. \& E.A. Brosey Professor of Electrical Engineering at the University of Notre Dame. He is graduate of the National Technical University of Athens, Greece, and holds MS and PhD degrees from Brown University. His research addresses problems of control and automation and examines ways to design control systems that will exhibit high degree of autonomy. His current research focuses on Cyber-Physical Systems and the interdisciplinary research area of control, computing and communication networks, and on hybrid and discrete event dynamical systems. He is IEEE, IFAC and AAAS Fellow, President of the Mediterranean Control Association, the 2006 recipient of the Engineering Alumni Medal of Brown University and holds an Honorary Doctorate from the University of Lorraine in France. He served as the President of the IEEE Control Systems Society in 1997 and was the Editor-in-Chief of the IEEE Transactions on Automatic Control for 8 years, 2010-2017.
\end{IEEEbiography}

\end{document}